\def\idtt#1{\ensuremath{\mathtt{#1}}}
\newtheorem{theorem}{Theorem}
\newtheorem{lemma}{Lemma}
\newtheorem{proposition}{Proposition}
\newenvironment{proof}{\trivlist\item[]\emph{Proof}:}%
{\unskip\nobreak\hskip 1em plus 1fil\nobreak$\Box$
\parfillskip=0pt%
\endtrivlist}
\newenvironment{itemize*}%
  {\begin{itemize}%
    \setlength{\itemsep}{0pt}%
    \setlength{\parskip}{0pt}%
    \setlength{\parsep}{0pt}%
    \setlength{\topsep}{0pt}%
    \setlength{\partopsep}{0pt}%
  }%
  {\end{itemize}}%
\newcommand{\cT}{{\cal T}}
\newcommand{\cD}{{\cal D}}
\newcommand{\cM}{{\cal M}}
\newcommand{\oS}{\overline{S}}
\newcommand{\bT}{\mathbb{T}}
\newcommand{\clab}{\idtt{lab}}
\newcommand{\rank}{\idtt{rank}}
\newcommand{\select}{\idtt{select}}
\newcommand{\Comp}{\idtt{Comp}}
\newcommand{\Tbl}{\idtt{Tbl}}
\newcommand{\eps}{\varepsilon}
\begin{document}

\title{A Dynamic Stabbing-Max Data Structure with Sub-Logarithmic  Query Time}
\author{Yakov Nekrich\thanks{Department of Computer Science, 
University of Chile. 
Supported in part by Millennium Institute for Cell Dynamics 
	        and Biotechnology (ICDB), Grant ICM P05-001-F, Mideplan, Chile.
Email {\tt yakov.nekrich@googlemail.com}.}
}
\date{}
\maketitle
\begin{abstract}
In this paper we describe a dynamic data structure that answers
one-dimensional  stabbing-max queries  in optimal $O(\log n/\log\log n)$ time. 
Our data 
structure uses linear space and supports insertions and deletions in 
$O(\log n)$ and $O(\log n/\log \log n)$ amortized time respectively.  

We also describe a $O(n(\log n/\log\log n)^{d-1})$ space 
data structure that answers $d$-dimensional stabbing-max 
queries in $O( (\log n/\log\log n)^{d})$ time. Insertions and deletions 
are supported in $O((\log n/\log\log n)^d\log\log n)$ 
and $O((\log n/\log\log n)^d)$ amortized time respectively.
\end{abstract}
\section{Introduction}
In the stabbing-max problem, a set of rectangles is stored in a data structure,
and each rectangle $s$ is assigned a priority $p(s)$.
For a query point $q$, we must find the highest priority rectangle $s$ that 
contains (or is stabbed by) $q$. 
In this paper we describe a dynamic data structure that answers 
stabbing-max queries on a set of one-dimensional 
rectangles (intervals) in optimal $O(\log n/\log \log n)$ 
time. We also show how this result can be extended to $d>1$ dimensions.

{\bf Previous Work.} 
The stabbing-max problem has important applications in networking and 
geographic information systems.
Solutions to some special cases of the stabbing-max problem  play a 
crucial role in classification and routing of Internet packets; we 
refer to e.g.,~\cite{GK00,FM00,SK02} for a small selection of the previous 
work and to~\cite{GK00a,SK02a} for more extensive surveys. 
Below we describe the  previous works on the general case of the 
stabbing-max problem.

The semi-dynamic data structure of Yang and Widom~\cite{YW01} maintains a set 
of one-dimensional intervals in linear space; 
 stabbing-max queries  and insertions are supported in $O(\log n)$ time.
Agarwal et al.~\cite{AAYY03} showed that 
stabbing-max queries on a set of one-dimensional intervals can be 
answered in $O(\log^2 n)$ time; the data structure of 
Agarwal et al.~\cite{AAYY03} uses linear space 
and supports updates in $O(\log n)$ time.
The linear space data structure of Kaplan et al.~\cite{KMT03} supports 
one-dimensional queries 
and insertions in  $O(\log n)$ time, but  deletions 
take $O(\log n\log\log n)$ time. In~\cite{KMT03}, the authors also consider 
the stabbing-max problem for a nested set of intervals: for any two 
intervals $s_1,\,s_2\in S$, either $s_1\subset s_2$ or $s_2\subset s_1$ 
or $s_1\cap s_2=\emptyset$. 
Their data structure for a nested set of one-dimensional intervals 
uses $O(n)$ space and 
supports both queries and updates in $O(\log n)$ time. 
Thorup~\cite{Th03} described a linear space data structure that supports
 very fast 
queries, but needs $\log^{\omega(1)}n$ time to perform updates.  
His 
 data structure supports stabbing-max queries in $O(\ell)$ time and updates in 
$O(n^{1/\ell})$ time for any parameter $l=o(\log n/\log \log n)$. 
However the problem of constructing a data structure with poly-logarithmic 
update time is not addressed in~\cite{Th03}. 
Agarwal et al.~\cite{AAY05} described a 
data structure that uses linear space and supports both queries and updates 
in $O(\log n)$ time for an arbitrary set of one-dimensional intervals. 
The results presented in~\cite{KMT03} and~\cite{AAY05} are valid in the 
pointer machine model~\cite{T79}.

The one-dimensional data structures can be extended to $d>1$ dimensions, so 
that space usage, query time, and update time increase by 
 a factor of $O(\log^{d-1} n)$. 
Thus the best previously known data structure~\cite{AAY05} 
for $d$-dimensional stabbing-max 
problem uses $O(n\log^{d-1}n)$ space, answers queries in $O(\log^dn)$ time, 
and supports updates in $O(\log^d n)$ time. Kaplan et al.~\cite{KMT03}
showed that $d$-dimensional stabbing-max queries can be answered in 
$O(\log n)$ time for any constant $d$ in the special case of nested
 rectangles. 

{\bf Our Result.}
The one-dimensional data structure described in~\cite{AAY05} achieves 
optimal query time in the pointer machine model.  In this paper we 
show that we can achieve sublogarithmic query time without increasing the 
update time in the word RAM model of computation.
Our data structure supports deletions and insertions in $O(\log n/\log\log n)$ 
and $O(\log n)$ amortized time respectively. 
As follows from the lower bound of~\cite{AHR98}, any fully-dynamic data
 structure 
with poly-logarithmic update time needs $\Omega(\log n/\log \log n)$ time 
to answer a stabbing-max query\footnote{In fact, the lower bound 
of~\cite{AHR98} 
is valid even for existential stabbing queries: Is there an interval in the set 
$S$ that is stabbed by  a query point $q$?}. 
Thus our data structure achieves optimal query time and space usage.

Our result can be also extended to $d>1$ dimensions. We describe 
a data structure that uses $O(n(\log n/\log\log n)^{d-1})$ space and 
answers stabbing-max queries 
in $O((\log n/\log\log n)^{d})$ time; insertions and deletions 
are supported in $O((\log n/\log\log n)^{d}\log \log n)$ and 
$O((\log n/\log\log n)^{d})$ amortized time respectively.
Moreover, our construction can be modified to support 
stabbing-sum\footnote{The stabbing-sum problem considered in this paper is to
be distinguished from the more general stabbing-group problem, in which every interval is associated with a weight drawn from a group $G$.} 
queries: 
we can count the number of one-dimensional intervals stabbed by a query point 
$q$ in $O(\log n/\log \log n)$ time. 
The stabbing-sum data structure also supports insertions and deletions 
in $O(\log n)$ 
and $O(\log n/\log\log n)$ amortized time.

{\bf Overview.}
We start by describing a simple one-dimensional stabbing-max data structure in 
section~\ref{sec:stabover}. This data structure achieves the desired query 
and update times but needs $\omega(n^2)$ space.  All intervals are stored 
in nodes of the base tree of height $O(\log n/\log \log n)$; the base tree 
is organized as a variant of the 
segment tree data structure. Intervals  in a node $u$ are stored 
in a variant of the van Emde Boas (VEB) data structure~\cite{EKZ77}. 
We can answer a stabbing-max query by traversing a leaf-to-root path in the 
base tree; the procedure spends  $O(1)$ time in each node on the path.

In section~\ref{sec:compact}, we show how all secondary data structures 
in the nodes of the base tree 
can be stored in $O(n\log n)$ bits of space. The main idea of our method 
is the compact representation of intervals stored in each node.  
Similar compact representations were also used in data structures 
for range reporting queries~\cite{Ch88,N09} and  some other 
problems~\cite{B08}. 
However the previous methods are too slow for our goal: 
we need $O(\log\log n)$ time to obtain the 
representation of an element $e$ in a node $u$ if the representation of 
$e$ in a child of $u$ is known. 
Therefore it would take $O(\log n)$ time to traverse a leaf-to-root path in 
the base tree. 
In this paper we present a new, improved compact storage scheme.  Using our 
representation, we can traverse a path in the base tree and spend $O(1)$ time 
in each node.  We believe that our method is of independent interest and 
can be also applied to other problems.

The results for multi-dimensional stabbing-max queries and stabbing-sum 
queries are described 
in Theorems~\ref{theor:multidim} and~\ref{theor:sum1d};
 an extensive description of these results is provided in Appendices~C and~D.

\section{A Data Structure with Optimal Query Time}
\label{sec:stabover}
{\bf Base Tree.}
In this section we describe a data structure that answers stabbing-max queries
in optimal time. Endpoints of all intervals from the set $S$ are 
stored in the leaves of the base tree $\cT$. Every leaf of $\cT$ contains 
$\Theta(\log^{\eps} n)$ endpoints. Every internal node, except of the root, 
has $\Theta(\log^{\eps}n)$ children; the root has $O(\log^{\eps}n)$ children. 
Throughout this paper $\eps$ denotes an arbitrarily small 
positive constant.  The range $rng(u)$ of a node $u$ is an interval bounded 
by the minimal and maximal values stored in the leaf descendants of $u$.
 
For a leaf node $u_l$, the set $S(u_l)$ contains all intervals $s$, such that 
at least one endpoint of $s$ belongs to $u_l$. 
For an internal node $u$, the set $S(u)$ contains all intervals $s$, such that 
$rng(u_i)\subset s$ for at least one child $u_i$ of $u$ but 
$rng(u)\not\subset s$. See Fig.~\ref{fig:basetree} for an example.
Thus each interval is stored in $O(\log n/\log\log n)$ sets $S(u)$.
For every pair $i\leq j$, $S_{ij}(u)$ denotes the set of all 
intervals $s\in S(u)$ such that $rng(u_f)\subset s$ for a child $u_f$ of $u$ 
if and only if $i\leq f\leq j$.  
For simplicity, we will sometimes not distinguish between intervals and 
their priorities. 

\begin{figure}[tb]
\centering
\includegraphics[width=.6\textwidth]{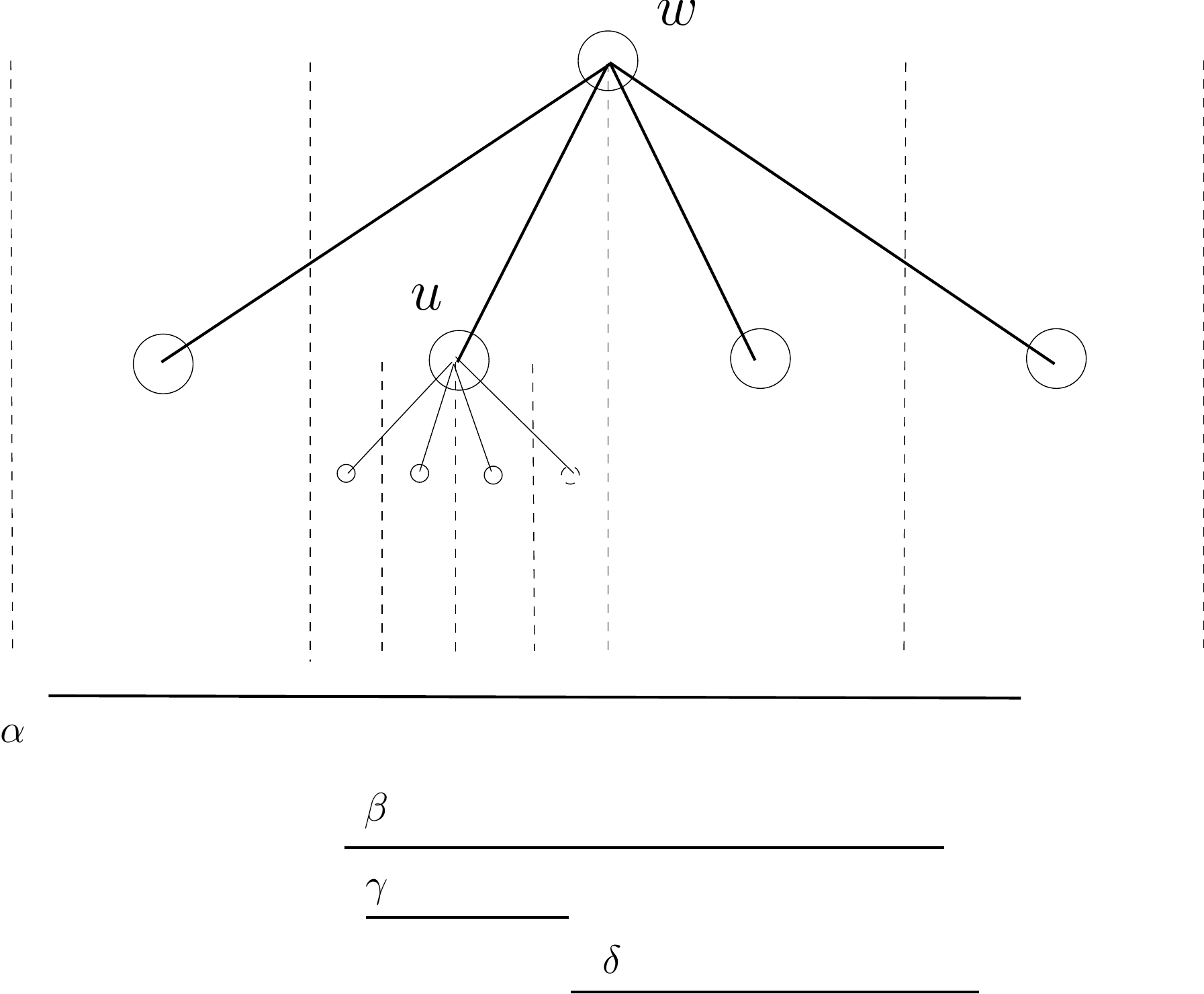}
\caption{\label{fig:basetree} 
Internal nodes of the base tree. Intervals $\alpha$ and $\delta$ are stored
 in the set $S(w)$, but are not stored in the set $S(u)$. Interval 
$\gamma$ is stored in $S(u)$, but $\gamma$ is not stored in $S(w)$. Interval 
$\beta$ is stored in both $S(u)$ and $S(w)$. Moreover, 
$\alpha$, $\beta$, and $\delta$ belong to  the sets 
$S_{23}(w)$, $S_{33}(w)$, and $S_{33}(w)$ respectively.
Intervals $\beta$ and $\gamma$ belong to $S_{24}(u)$ and $S_{23}(u)$ 
respectively.
}
\end{figure}

\tolerance=1000
{\bf Secondary Data Structures.} 
For each internal node $u$, we store a data structure $D(u)$ described in the 
following Proposition.
\begin{proposition}\label{lemma:search1d}
Suppose that priorities of all intervals in $S(u)$ are integers in 
the interval $[1,p_{\max}]$ for $p_{\max}=O(n)$. 
There exists a data structure $D(u)$ that uses $O(p_{\max}\cdot\log^{2\eps}n)$ 
words of space 
and supports the following queries: for any $l\leq r$, 
the predecessor of $q$ in $S_{lr}(u)$ can be found in $O(\log\log n)$
time and the maximum element in $S_{lr}(u)$ can be found in $O(1)$ time. 
The data structure supports insertions in $O(\log\log n)$ time. 
If a pointer to an interval $s\in S(u)$ is given, $s$ can be deleted in 
$O(1)$ amortized time.
\end{proposition}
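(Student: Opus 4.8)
The plan is to treat $D(u)$ as a collection of $O(\log^{2\eps}n)$ almost-independent predecessor structures, one per pair $(l,r)$ of child indices (there are $O(\log^{2\eps}n)$ such pairs because $u$ has $\Theta(\log^\eps n)$ children), which is legitimate because the sets $S_{lr}(u)$ partition $S(u)$: the children whose range an interval fully covers form a contiguous block, so each interval of $S(u)$ lies in exactly one $S_{lr}(u)$, and both an insertion and the deletion pointer point into exactly one pair. For a fixed pair I would maintain (i) an array $T_{lr}$ of length $p_{\max}$ mapping a priority to the list of intervals of $S_{lr}(u)$ with that priority; (ii) a doubly-linked list $L_{lr}$ of the intervals currently in $S_{lr}(u)$, sorted by priority, whose head is the maximum; and (iii) a van Emde Boas structure (or y-fast trie) $Y_{lr}$ over the universe $[1,p_{\max}]$ holding the distinct priorities currently present. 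This uses $O(p_{\max})$ words per pair, hence $O(p_{\max}\log^{2\eps}n)$ words overall. The maximum of $S_{lr}(u)$ is the head of $L_{lr}$, read in $O(1)$; a predecessor query is a van Emde Boas predecessor in $O(\log\log n)$ followed by an $O(1)$ lookup in $T_{lr}$; an insertion performs a van Emde Boas insertion plus an $O(1)$ splice into $L_{lr}$ (preceded by one predecessor query to find the splice point), all in $O(\log\log n)$; and unlinking a given interval from $L_{lr}$ and from its $T_{lr}$-slot is $O(1)$.

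Everything above is immediate except the $O(1)$-amortized deletion, since van Emde Boas deletion costs $\Theta(\log\log n)$. I would therefore leave $Y_{lr}$ untouched by deletions: when the last interval of a priority is removed, that priority stays in $Y_{lr}$ as a stale entry. To stop $Y_{lr}$ from accumulating stale entries (which would break the space bound), I rebuild all of $D(u)$ whenever $\Theta(|S(u)|)$ updates have occurred since the previous rebuild; a rebuild scans the already-sorted lists $L_{lr}$ and rebuilds each $Y_{lr}$ using a layout that is \emph{buildable in time linear in its size from sorted input} while still answering predecessor in $O(\log\log n)$ --- for instance, partition the sorted priorities into groups of $\Theta(\log n)$, place the group minima in a statically built x-fast trie, and answer a query by an $O(\log\log n)$ search of the trie plus an $O(\log\log n)$ binary search inside one group. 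A rebuild then costs $O(|S(u)|)$, i.e. $O(1)$ amortized per update, so a deletion costs $O(1)$ amortized; de-amortizing the rebuild in the standard way and diverting priorities that first appear between rebuilds into a small auxiliary dynamic structure keeps insertions at $O(\log\log n)$ worst case.

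The hard part will be keeping predecessor queries correct \emph{and} $O(\log\log n)$ while $Y_{lr}$ contains stale priorities. Liveness of a priority returned by $Y_{lr}$ is an $O(1)$ test via $T_{lr}$; if it is stale, the query must retreat to the largest live priority below it, and a run of consecutive stale priorities could be long. My fix is a per-pair ``shortcut'' array over the priorities threaded by $Y_{lr}$'s sorted list: a deletion that empties a priority's slot merely redirects that priority's shortcut pointer one step left ($O(1)$), and a predecessor query chases these pointers with path compression until it hits a live priority. Because the merges induced this way only ever join consecutive list positions, this is an instance of interval union--find and supports all operations in $O(1)$ amortized time, so each predecessor query stays within $O(\log\log n)$ and the amortized charge to a deletion remains $O(1)$. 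What is left is routine bookkeeping: checking that lazy deletions, the auxiliary-buffer insertions, the de-amortized periodic rebuild, and the shortcut structure coexist without violating any time bound or the $O(p_{\max}\log^{2\eps}n)$-space bound, including the treatment of equal priorities and of re-inserting a priority all of whose intervals had been deleted.
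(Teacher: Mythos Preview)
Your core decomposition---one predecessor structure per pair $(l,r)$, of which there are $O(\log^{2\eps}n)$ since $u$ has $\Theta(\log^{\eps}n)$ children---is exactly what the paper does. The paper's proof, however, is essentially two lines: store each $S_{lr}(u)$ in a van Emde Boas structure over $[1,p_{\max}]$, which already gives $O(p_{\max})$ space per pair, $O(\log\log n)$ predecessor and insertion, and (by the standard folklore tweak) $O(1)$ max. The $O(1)$ \emph{amortized} deletion then follows by a trivial accounting argument: every element is inserted once and deleted at most once, so the $O(\log\log n)$ cost of its eventual VEB deletion can be prepaid at insertion time, leaving the insertion bound at $O(\log\log n)$ and making the deletion itself free.

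Your lazy-deletion layer, the interval union--find for skipping stale priorities, and the periodic rebuild are therefore all unnecessary. The rebuild in particular is moot: a VEB over $[1,p_{\max}]$ occupies $O(p_{\max})$ words regardless of occupancy, so stale entries cannot blow up the space. And the union--find shortcut introduces a complication you list as ``routine bookkeeping'' but which is not quite routine: re-inserting a priority that had gone stale amounts to \emph{splitting} a merged class, which interval union--find does not support, and path compression may already have routed later priorities' pointers straight past the revived one. This is fixable (e.g.\ via your auxiliary-buffer idea, or by using a split--find structure instead), but the one-line prepayment argument sidesteps the whole issue.
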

\begin{proof}
It suffices to store all intervals from $S_{lr}(u)$ in a VEB data structure
$D_{lr}(u)$. Each $D_{lr}(u)$ uses $O(p_{\max})$ words of space and answers 
queries in $O(\log\log p_{\max})=O(\log \log n)$ time~\cite{EKZ77}.  
It is a folklore observation that we can modify the VEB data structure 
so that maximum queries are supported in constant time.
\end{proof}

We also store a data structure $M(u)$ that contains  
the interval $\max_{ij}(u)$ with maximal 
priority among all intervals  in $S_{ij}(u)$ for each $i\leq j$. 
Since each internal node has $\Theta(\log^{\eps}n)$ children, 
$M(u)$ contains $O(\log^{2\eps}n)$ elements. 
For any query index $f$, $M(u)$ reports the largest  element among all 
$\max_{ij}$, $i\leq f\leq j$. In other words for any child $u_f$ of $u$, 
$M(u)$ can find the interval with the highest priority that covers the 
range of the $f$-th child of $u$. 
Using standard techniques, we can implement $M(u)$ so that 
queries and updates are supported in $O(1)$ time. For completeness, 
we describe the data structure $M(u)$ in Appendix A. 

{\bf Queries and Updates.}
Let $\pi$ denote the search path for the query point $q$ in the base tree 
$\cT$. The path $\pi$ consists of the nodes $v_0,v_1,\ldots,v_R$ where 
$v_0$ is a leaf node and $v_R$ is the root node. 
Let $s(v_i)$ be the interval with the highest priority among all intervals 
that are stored in $\cup_{j\leq i} S(v_j)$ and are stabbed by $q$.
The interval $s(v_0)$ can be found by examining  all 
$O(\log^{\eps}n)$ intervals stored in $S(v_0)$. 
Suppose that we reached a node $v_i$ and the interval $s(v_{i-1})$ 
is already known. 
If $q$ stabs an interval $s$ stored in $S(v_i)$, then $rng(v_{i-1})\subset s$.
Therefore $q$ stabs an interval $s\in S(v_i)$ if and only if  $s$ is stored 
in some set $S_{lr}(v_i)$ such that $l\leq f\leq r$ and $v_{i-1}$ is 
the $f$-th child of $v_i$. 
Using the data structure $M(v_i)$, 
we can find in constant time the maximal interval $s_m$, such that 
$s_m\in S_{lr}(v_i)$ and  $l\leq f\leq r$. Then we just 
set $s(v_i)=\max(s_m,s(v_{i-1}))$ and proceed in the next node $v_{i+1}$. 
The total query time is $O(\log n/\log \log n)$.

When we insert an interval $s$, we identify $O(\log n/\log \log n)$
nodes $v_i$ such that $s$ is to be inserted into $S(v_i)$.  
For every such $v_i$, we proceed as follows. We identify $l$ and $r$ such 
that $s$ belongs to $S_{lr}(v_i)$. Using $D(v_i)$, we find the position of 
$s$ in $S_{lr}(v_i)$, and insert $s$ into $S_{lr}(v_i)$. 
If  $s$ is the maximal interval in 
$S_{lr}(v_i)$, we delete the old interval 
$\max_{lr}(v_i)$ from  the data structure $M(v_i)$, 
set $\max_{lr}(v_i)=s$,  and insert the new $\max_{lr}(v_i)$ into $M(v_i)$.  

When an interval $s$ is deleted, we also identify nodes $v_i$, such that 
$s\in S(v_i)$. For each $v_i$, we  find the indices $l$ and $r$, 
such that $s\in S_{lr}(v_i)$. Using the procedure that will be described 
in the next section, we can find the position of $s$ in $S_{lr}(v_i)$. 
Then, $s$ is deleted from the data structure $D(v_i)$. If $s=\max_{lr}(v_i)$, 
we remove  $\max_{lr}(v_i)$ from $M(v_i)$, find the maximum priority 
interval in $S_{lr}(v_i)$, and insert it into $M(v_i)$. 
We will show in section~\ref{sec:compact} that positions of the 
deleted interval $s$ in $S_{lr}(v_i)$ for all nodes $v_i$ can be found 
in $O(\log n/\log\log n)$ time. Since all other operations take 
$O(1)$ time per node, the total time necessary for a deletion of an 
interval is $O(\log n/\log \log n)$.

Unfortunately, the space usage of the data structure described in this 
section is very high: every VEB data structure $D_{lr}(u)$ needs 
$O(p_{\max})$ space, where $p_{\max}$  is the highest possible interval 
priority. Even if $p_{\max}=O(n)$, all data structures $D(u)$ 
use $O(n^2\log^{2\eps}n)$ space. 
In the next section we show how all data structures $D(u)$, $u\in \cT$, can be 
stored in $O(n\log n)$ bits without increasing the query and update 
times.

\section{Compact Representation}
\label{sec:compact}
The key idea of our compact representation is to store only interval
identifiers in every node $u$ of $\cT$. Our storage scheme enables us to
spend $O(\log\log n)$ bits for each identifier stored in a node $u$.
Using the position of an interval $s$ in a node $u$, we
can obtain the position of $s$ in the parent $w$ of $u$.  We can also
compare priorities of two intervals stored in the same node by
comparing their positions.  These properties of our storage scheme
enable us to traverse the search path for a point $q$ and answer the
query as described in section~\ref{sec:stabover}.

Similar representations were also used in space-efficient data structures 
for orthogonal range reporting~\cite{Ch88,N09} and orthogonal point 
location and line-segment intersection problems~\cite{B08}. 
Storage schemes of~\cite{N09,B08} also use $O(\log\log n)$ bits for each 
interval stored in  a node of the base tree.   
The main drawback of those methods is that we need $O(\log\log n)$ time 
to navigate between a node and its parent. Therefore, $\Theta(\log n)$ 
time is necessary to traverse a leaf-to-root path and we need 
$\Omega(\log n)$ time to answer a query. 
In this section we describe a new method that enables us to navigate between
 nodes of the base tree and update the lists of identifiers 
in $O(1)$ time per node. The main idea of our improvement is to maintain 
identifiers for a set $\oS(u)\supset S(u)$ in every $u\in \cT$. 
When an interval is inserted in $S(u)$, we also add its identifier to
$\oS(u)$. But when an interval is deleted from $S(u)$, its identifier 
is not removed from  $\oS(u)$. When the number of deleted interval identifiers 
in all $\oS(u)$ exceeds the number of intervals in $S(u)$, we re-build 
the base tree and all secondary structures (global re-build). 


{\bf Compact Lists.}
We start by defining a set $S'(u)\supset S(u)$. If $u$ is a leaf 
node, then $S'(u) =S(u)$.
If $u$ is an internal node, then $S'(u)=S(u)\cup (\cup_i S'(u_i))$
for all children $u_i$ of $u$. An interval $s$ belongs to $S'(u)$ 
if at least one endpoint of $s$ is stored  in a leaf descendant of $u$. 
Hence, $|\cup_v S'(v)|=O(n)$ where the union is taken over all nodes $v$ 
that are situated on the same level of the base tree $\cT$.
Since the height of $\cT$ is $O(\log n/\log\log n)$, the total number 
of intervals stored in all $S'(u)$, $u\in \cT$, is $O(n\log n/\log\log n)$.

Let $\oS(u)$ be the set that contains all intervals from $S'(u)$ that were 
inserted into $S'(u)$ since the last global re-build. 
We will organize global re-builds in such way that at most one half of 
elements in all $\oS(u)$ correspond to deleted intervals. 
Therefore the total number of intervals in $\cup_{u\in T} \oS(u)$ is 
$O(n\log n/\log \log n)$. We will show below how we can store the represntations of sets $\oS(u)$ in 
compact form, so that an element of $\oS(u)$ uses $O(\log \log n)$ bits 
in average. Since $S(u)\subset \oS(u)$, we can also use the same method 
to store all $S(u)$ and $D(u)$ in $O(n\log n)$ bits.

Sets $S'(u)$ and $\oS(u)$ are not stored explicitly in the data structure. 
Instead, we store a list $\Comp(u)$ that contains a compact representation 
for identifiers  
of intervals in $\oS(u)$. $\Comp(u)$ is organized as follows.
The set $\oS(u)$ is sorted by interval priorities and divided into blocks. 
If $|\oS(u)|>\log^3n/2$, then  each block of $\oS(u)$ contains 
at least $\log^3n/2$ and at most $2\log^3 n$ elements. Otherwise all 
$e\in \oS(u)$ belong to the same block. Each block $B$ is assigned an 
integer block label $\clab(B)$
according to the method of~\cite{IKR81,W92}. 
Labels of blocks are monotone with respect to order of blocks in 
$\Comp(u)$: The block $B_1$ precedes $B_2$ in $\Comp(u)$ if and only if 
$\clab(B_1)<\clab(B_2)$.
Besides that, all labels assigned to blocks of $\Comp(u)$  are bounded by a 
linear function of $|\Comp(u)|/\log^3n$: for any 
block $B$ in $\Comp(u)$, $\clab(B)=O(|\Comp(u)|/\log^3n)$.  
When a new block is inserted into a list, we may have to change the labels 
of $O(\log^2 n)$ other blocks.

 For every block $B$, we store its block label as well as the pointers 
to the next  and the previous blocks in the list $\Comp(u)$.  
For each interval $\bar{s}$ in a block $B$ of $\oS(u)$, 
the list $\Comp(u)$ contains the \emph{identifier} of $\bar{s}$ in 
$\Comp(u)$. The identifier is simply the list of indices of 
children $u_i$ of $u$, 
such that $\bar{s}\in \Comp(u_i)$. As follows from the description of 
the base tree and the sets $\oS(u)$, we store at most two child indices 
for every interval $\bar{s}$ in a block; hence, each identifier uses 
$O(\log\log n)$ bits. 
To simplify the description, we sometimes will 
not distinguish between an interval $s$ and its identifier in a 
list $\Comp(u)$.

We say that the position of an interval $s$ in a list $\Comp(u)$ 
is known if the block $B$ that contains $s$ and the position of $s$ 
in $B$ are known. If we know positions of two intervals $s_1$ and $s_2$ in $\Comp(u)$, 
we can compare their priorities in $O(1)$ time. 
Suppose that $s_1$ and $s_2$ belong to blocks $B_1$ and $B_2$ respectively. 
Then $s_1>s_2$ if $\clab(B_1)>\clab(B_2)$, and 
$s_1<s_2$ if $\clab(B_1)<\clab(B_2)$. If $\clab(B_1)=\clab(B_2)$, 
we can compare priorities of $s_1$ and $s_2$ by comparing their 
positions in the block $B_1=B_2$.

The rest of this section has the following structure. 
First, we describe auxiliary data structures that enable us to search 
in a block and navigate between nodes of the base tree.
Each block $B$ contains a poly-logarithmic number of elements and every 
identifier in $B$ uses $O(\log \log n)$ bits. We can use this fact 
and implement block data structures, so that queries and updates 
are supported in $O(1)$ time.  
If the position of some $\bar{s}$ in a list $\Comp(u)$ is known, 
we can find in constant time the positions of $\bar{s}$ in the parent of 
$u$.
Next, we show how data structures $D(u)$ and $M(u)$, defined in
 section~\ref{sec:stabover}, are modified. 
Finally, we describe the search and update procedures.


{\bf Block Data Structures.}
We store a data structure $F(B)$ that supports rank and select queries 
in a block $B$ of $\Comp(u)$: A query $\rank(f,i)$ returns the number of 
intervals that also belong to $\oS(u_f)$ among the first $i$ elements of $B$.
A query $\select(f,i)$ returns the smallest $j$, such 
that $\rank(f,j)=i$; in other words, $\select(f,i)$ returns the position 
of the $i$-th interval in $B$ that also belongs to $\oS(u_f)$.
We can answer $\rank$ and $\select$ queries in a block in $O(1)$ time. 
We can also count the number of elements in a block of $\Comp(u)$ 
that are stored in the $f$-th child of $u$, and determine for the 
$r$-th element of a block in which children of $u$ it is stored. 
Implementation of $F(B)$ will be described in Appendix A. 

For each block $B_j\in \Comp(u)$ and for any child $u_f$ of $u$, we store 
a pointer to the largest block $B^f_j$ before $B_j$  that contains an element
 from $\Comp(u_f)$. These pointers 
are 
maintained with  help of a data structure $P_f(u)$ for each child $u_f$. 
We implement $P_f(u)$ as an incremental split-find data structure~\cite{IA87}.
Insertion of a new block label into $P_f(u)$ takes $O(1)$ amortized time;
we can also find the block $B^f_j$ for any block $B_j\in \Comp(u)$ in 
$O(1)$ worst-case time.  
Using the data structure $F(B_j)$ for a block $B_j$, 
we can identify for any element $e$ in a block 
$B_j$ the largest $e_f\leq e$, $e_f\in B_j$, such that $e_f$ belongs 
to $\Comp(u_f)$. Thus we can identify for any $e\in \Comp(u)$ the largest
 element $e_f\in \Comp(u)$, such that $e_f\leq e$ and  $e_f\in \Comp(u_f)$,  
in $O(1)$ time. 

{\bf Navigation in Nodes of the Base Tree.}
Finally, we store  pointers to selected elements in each list $\Comp(u)$.  
Pointers enable us to navigate between nodes of the base tree: if 
the position of some $e\in \Comp(u)$ is known, we can find the position 
of $e$  in $\Comp(w)$ for the parent $w$ 
of $u$ and the position of $e$ in $\Comp(u_i)$ for any child $u_i$ of $u$ 
such that $\Comp(u_i)$ contains $e$. 

We associate a {\em stamp} $t(e)$ with each element stored in a block $B$; 
every  stamp is a positive integer bounded by $O(|B|)$. 
A pointer to an element $e$ in a block $B$ consists of the block label 
$\clab(B)$ and the stamp of $e$ in $B$. 
When an element is inserted into a block, stamps of other elements in this 
 block do not change.  Therefore, when a new interval is inserted into a block 
$B$ we do not have to update all pointers that point into $B$. 
Furthermore, we store a data structure $H(B)$ for each block $B$. 
Using $H(B)$, we can find the position of an element $e$ in $B$ if its stamp 
$t(e)$ in $B$ is known. If an element $e$ must be inserted into $B$ after 
an element $e_p$, then we can assign a stamp $t_e$ to $e$ and insert it into 
$H(B)$ in $O(1)$ time. Implementation of $H(B)$ is very similar to the 
implementation of $F(B)$ and will be described in the full version.

If $e$ is the first element in the block $B$ of $\Comp(u)$ 
that belongs to $\Comp(u_i)$, 
then we store the pointer from the copy of $e$ in $\Comp(u)$ 
to the copy of $e$ in $\Comp(u_i)$. 
If an element $e\in \Comp(u)$ is also stored in $\Comp(u_i)$ 
and $e$ is the first element in a block $B'$ of $\Comp(u_i)$,
then there is a pointer from the copy of $e$ in $\Comp(u)$ to the copy 
of $e$ in $\Comp(u_i)$. 
Such pointers will be called child pointers.
For any pointer from  $e\in \Comp(u)$ to $e\in \Comp(u_i)$, we store 
a pointer from $e\in \Comp(u_i)$ to $e\in \Comp(u)$.
Such pointers will be called parent pointers.  See Fig.~\ref{fig:pointers}
for an example. 
\begin{figure}[tb]
\centering
\includegraphics[trim= 0 65mm 0mm 0, clip, width=.7\textwidth]{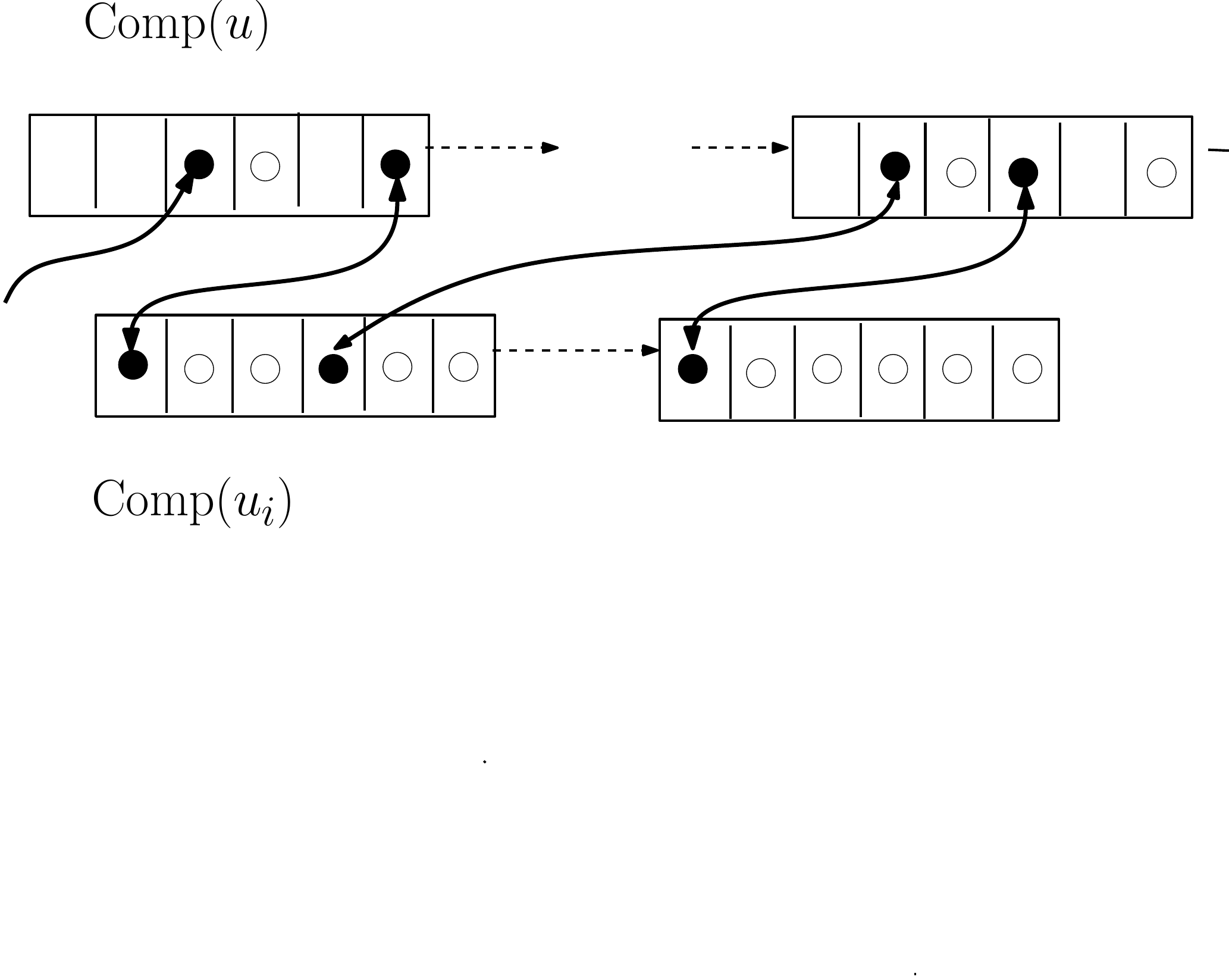}
\caption{\label{fig:pointers} 
Parent and child pointers in blocks of $\Comp(u)$ and $\Comp(u_i)$. 
Intervals that are stored in $\Comp(u_i)$ are depicted by circles; filled 
circles are intervals with pointers. Only relevant intervals and pointers are 
shown in $\Comp(u)$. In $\Comp(u_i)$, only parent pointers are shown.
}
\end{figure}

%
We can store each pointer in $O(\log n)$ bits. 
The total number of  pointers 
in $\Comp(u)$ equals to the number of blocks in $\Comp(u)$ and $\Comp(u_i)$ 
for all children $u_i$ of $u$. Hence, all pointers and all block data 
structures use $O(n\log n)$ bits.


If we know the position of some interval $s$ in $\Comp(v)$, we can 
find the position of $s$ in the parent $w$ of $v$ as follows. 
Suppose that an interval $s$ is stored in the block $B$ of $\Comp(v)$
and $v$ is the $f$-th child of $w$. 
We find the last interval $s'$ in $B$ stored before $s$, such that 
there is a parent pointer from $s'$. 
Let $m$ denote the number of elements between $s'$ and $s$ in $B$.
Let $B'$ be the block in $\Comp(w)$ that contains $s'$. 
Using $H(B')$, we find the position $m'$ of $s'$ in the block $B'$ 
of $\Comp(w)$. Then the position of $s$ in $B'$ can be found by answering 
the query $\select(f,\rank(f,m')+m)$ to a data structure $F(B')$. 
Using a symmetric procedure, we can find the position of $s$ in $\Comp(v)$ 
if its position in $\Comp(w)$ is known.



{\bf Root and Leaves of the Base Tree.}
Elements of $\oS(v_R)$ are explicitly stored in the root node $v_R$.
That is, we store a table in the root node $v_R$ that enables us to find 
for any interval $s$ the block $B$ that contains the identifier of $s$
in $\Comp(v_R)$ and the stamp of $s$ in $B$. 
Conversely, if the position of $s$ in a block $B$ of $\Comp(v_R)$ is 
known, we can find its stamp in $B$ in $O(1)$ time. 
If the block label and the stamp of an interval $s$ are known, we
can identify $s$ in $O(1)$ time. 
In the same way, we explicitly store the elements of $\oS(u_l)$ 
for each leaf node $u_l$. 
Moreover, we store all elements of $\oS(v_R)$ in a data structure $R$,
so that the predecessor and the successor of any value $x$ can be found 
in $O(\log n/\log\log n)$ time.


{\bf Data Structures $M(u)$ and $D(u)$.}
Each set $S(u)$ and data structure $D(u)$ are also stored in compact form. 
$D(u)$ consists of structures $D_{lr}(u)$ for every pair $l\leq r$. 
If a block $B$ contains a label of an interval $s\in S_{lr}(u)$, then 
we store the label of  $B$ in the VEB data structure $D_{lr}(u)$.  
The data structure $G(B)$ contains data about intervals in  
$S(u)\cap B$.  
For every $s\in G(B)$, we store indices $l,r$ if $s\in S_{lr}(u)$; 
if an element $s\in B$ does not belong to $S$ (i.e., $s$ was already 
deleted), then we set $l=r=0$. 
For a query index $f$, $G(B)$ returns  in $O(1)$ time the highest priority
 interval $s\in B$,
such that $s\in S_{lr}(u)$ and $l\leq f\leq r$.
A data structure $G(B)$ uses $O(|B|\log\log n)$ bits of space and supports
updates in $O(1)$ time.
Implementation of $G(B)$ is very similar to the implementation of $F(B)$ and 
will be described in the full version of this paper. 
We store a data structure $M(u)$ in  every node $u\in \cT$, such that 
$\Comp(u)$ consists of at least two blocks.
For each pair $l\leq r$, the data structure $M(u)$ stores 
 the label of the block that contains the highest priority interval in 
$S_{lr}(u)$. For a query $f$, $M(u)$ returns the label of the block that 
contains  the highest priority interval in $\cup_{l\leq f\leq r} S_{lr}(u)$.
Such queries are supported in $O(1)$ time; a more detailed description 
of the data structure $M(u)$ will be given in Appendix A. 

{\bf Search Procedure.}  
We can easily modify the search procedure of section~\ref{sec:stabover}
for the case when only lists $\Comp(u)$  are stored in each node. 
Let $v_i$ be a node on the 
path $\pi=v_0,\ldots v_R$, where $\pi$ is the search path for 
the query point $q$.  
Let $s(v_i)$ denote  the interval that has the highest priority
 among all intervals that belong to $\cup_{j\leq i}S(v_j)$ and are stabbed by
 $q$.    
We can examine all intervals in $S(v_0)$ and find $s(v_0)$ in 
$O(\log^{\eps}n)$ time. The position of $s(v_0)$ in $\Comp(v_0)$ 
can also be found in $O(1)$ time.  
During the $i$-th step, $i\geq 1$, we find the position 
of $s(v_i)$ in $\Comp(v_i)$.  An interval $s$ in $S(v_i)$ is stabbed by $q$ 
if and only if $s\in S_{lr}(v_i)$, $l\leq f\leq r$, and $v_{i-1}$ is the 
$f$-th child of $v_i$. Using $M(v_i)$, we can find the block label of 
the maximal interval $s_m$ among all intervals stored in $S_{lr}(v_i)$ 
for $l\leq f\leq r$. Let $B_m$ be the block that contains $s_m$. 
Using the data structure $G(B_m)$, we can find the position of $s_m$ 
in $B_m$. 

By definition, $s(v_i)=\max(s_m,s(v_{i-1}))$. Although we have no access to $s_m$ 
and $s(v_{i-1})$, we can 
compare their priorities by comparing positions of $s_m$ and $s(v_{i-1})$ 
in $\Comp(v_i)$. Since the position of $s(v_{i-1})$ in $\Comp(v_{i-1})$ 
is already known, 
we can find its position in $\Comp(v_i)$ in $O(1)$ time. We can also determine, 
whether $s_m$ precedes or follows $s(v_{i-1})$ in $\Comp(v_i)$ in $O(1)$ time.
Since a query to $M(v_i)$ also takes $O(1)$ time, our search procedure 
spends constant time in each node $v_i$ for $i\geq 1$. 
When we know the position of $s(v_R)$ in $\Comp(v_R)$, we can find 
the interval $s(v_R)$ in $O(1)$ time.  
The interval $s(v_R)$ is the highest priority interval in $S$ that is stabbed 
by $q$. 
Hence, a query can be answered in $O(\log n/\log \log n)$ time.

We describe how our data structure can be updated in
 section~\ref{sec:rebal}. 
Our result is summed up in the following Theorem.
\begin{theorem}
\label{theor:1d}
There exists a linear space data structure that answers orthogonal 
stabbing-max queries in $O(\log n/\log \log n)$ time. This data structure 
supports insertions and deletions in $O(\log n)$ and $O(\log n/\log \log n)$ 
amortized time respectively.
\end{theorem}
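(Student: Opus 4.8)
The plan is to assemble the components already constructed. The correctness and $O(\log n/\log\log n)$ running time of the query algorithm were argued above: the base tree $\cT$ has height $O(\log n/\log\log n)$, and the search procedure walks the root-to-leaf path $\pi=v_0,\dots,v_R$ for the query point $q$, deriving the position of $s(v_i)$ in $\Comp(v_i)$ from that of $s(v_{i-1})$ in $\Comp(v_{i-1})$ in $O(1)$ time per node with the help of $M(v_i)$, $G(B_m)$, the parent/child pointers, and the block structures $F$ and $H$, and finally recovering $s(v_R)$ from the explicit root table. So what remains is to give the update procedures, bound their amortized cost, and check that the structure uses $O(n)$ words.

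For an insertion of an interval $s$ I would first locate the two endpoints of $s$ among the leaves, identify the $O(\log n/\log\log n)$ nodes $u$ into whose set $S(u)$ (hence $\oS(u)$ and $\Comp(u)$) the interval must go --- all of them lying on the two root-to-leaf paths for the endpoints of $s$ --- and process each such node in turn. At a node $u$ I use a predecessor query to $D(u)$ in $O(\log\log n)$ time to find the position of $s$ in $S_{lr}(u)$, insert the $O(\log\log n)$-bit identifier of $s$ into the appropriate block of $\Comp(u)$, split the block if it overflows, and keep the order-maintenance labels of~\cite{IKR81,W92}, the structures $P_f(u)$, $D_{lr}(u)$, $G(B)$ and $M(u)$, and the parent/child pointers up to date. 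A block split triggers $O(\log^2 n)$ relabellings and an $O(\log\log n)$ insertion into a VEB, but it happens only once per $\Theta(\log^3 n)$ element insertions into that list, so it amounts to $O(1)$ amortized per node; everything else is $O(\log\log n)$ per node. Adding the $O(\log n/\log\log n)$ cost of inserting $s$ into the explicit root structures $R$ and the root table, the total is $O(\log n)$ amortized.

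For a deletion I would identify the same $O(\log n/\log\log n)$ nodes $v_i$ with $s\in S(v_i)$, which again all lie on two root-to-leaf paths. The point is to locate $s$ in every one of the lists $\Comp(v_i)$ in $O(\log n/\log\log n)$ time \emph{in total}: start from the position of $s$ in $\Comp(v_R)$, which the explicit root table returns in $O(1)$ time, and walk down both paths, at each step deriving the position of $s$ in the child's list from its position in the parent's list by the $O(1)$-time navigation of Section~\ref{sec:compact}. At each $v_i$ I then mark $s$ deleted in $G(B)$ (setting its $(l,r)$ pair to $(0,0)$) without removing its identifier from $\Comp(v_i)$, and, if $s$ was $\max_{lr}(v_i)$, read the new maximum via $D_{lr}(v_i)$ and $G(\cdot)$ and update $M(v_i)$ --- all $O(1)$ amortized, with deletions from the VEB structures $D_{lr}$ handled lazily. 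Together with the $O(\log n/\log\log n)$ cost of updating the root structures, a deletion costs $O(\log n/\log\log n)$ amortized.

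Finally, space and rebuilding. Since identifiers are never removed from the sets $\oS(u)$, a global rebuild of the base tree and all secondary structures is performed whenever the number of deleted-but-still-present identifiers, summed over all $\oS(u)$, first reaches $\sum_u|S(u)|$; between rebuilds $\sum_u|\oS(u)|=O(n\log n/\log\log n)$. With $O(\log\log n)$ bits per identifier, $O(1)$ pointers of $O(\log n)$ bits per block, $O(n/(\log^2 n\log\log n))$ blocks overall, VEB universes for the $D_{lr}(u)$ summing to $o(n)$, and the $G(B)$ structures using $O(n\log n)$ bits, all the compact machinery fits in $O(n\log n)$ bits; the explicit structures at the root and the leaves add $O(n)$ words, for $O(n)$ words total. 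A rebuild sorts each $\oS(u)$ by priority and rebuilds all block structures in $O(n\log n/\log\log n)$ time and is triggered only once per $\Omega(n)$ updates, so it is $O(\log n/\log\log n)$ amortized per update. I expect the main obstacle to be exactly the deletion bound: showing that a single downward traversal from the root suffices to pinpoint the deleted interval in all $O(\log n/\log\log n)$ secondary structures in $O(\log n/\log\log n)$ total time --- which hinges on the $O(1)$-time parent-to-child navigation and on the structural fact that the nodes storing a fixed interval lie on two root-to-leaf paths --- together with the lazy-deletion/global-rebuild accounting that keeps the space linear without spoiling the amortized update times.
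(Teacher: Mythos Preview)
Your query and deletion outlines match the paper closely, and the space accounting is essentially right. The real gap is on the insertion side: you never say how the base tree $\cT$ is kept balanced. Endpoints of newly inserted intervals must be added to leaves; without any structural maintenance a single leaf can accumulate $\omega(\log^{\eps}n)$ endpoints, the height is no longer $O(\log n/\log\log n)$, and both the query bound and the per-node work break down. The global rebuild you describe is triggered only by accumulated deletions and does nothing for this.

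The paper handles this with a weight-balanced B-tree (Section~\ref{sec:rebal}): when a node $u$ of weight $n_u$ overflows it is split into $u',u''$, and the crux is showing that \emph{all} affected secondary structures can be rebuilt in $O(n_u)$ time so that the split is $O(1)$ amortized. This is not automatic. Splitting $u$ changes the child indexing at the parent $w$, so every identifier in $\Comp(w)$ that points past the split position must be incremented; there are $\Theta(n_u\cdot\log^{\eps}n)$ of them, and the paper needs a word-level look-up table to batch $\sqrt{\log n}$ increments at a time to stay within $O(n_u)$. Some intervals also migrate from $S(u)$ to $S(w)$ and between sets $S_{lk}(w)$ and $S_{l(k+1)}(w)$, which forces updates to $D(w)$ and the block structures $G(B)$; bounding these by $O(n_u)$ uses that $\Comp(w)$ has only $O(n_u/\log^{3}n)$ blocks. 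None of this is covered by your proposal, and it is where the insertion cost actually lives --- the ``main obstacle'' is not the deletion bound you flag (that part follows directly from the $O(1)$ navigation you already have) but this rebalancing argument.

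A smaller point: your sentence ``use a predecessor query to $D(u)$ \dots\ to find the position of $s$ in $S_{lr}(u)$'' doesn't work as written, because in the compact version $D_{lr}(u)$ stores block \emph{labels}, not priorities. The paper instead finds the predecessor of $s$ once at the root via the explicit structure $R$ and then carries that position down the path using the $O(1)$ navigation; the $O(\log\log n)$ per node comes only from inserting a new block label into $D_{lr}(v_i)$.
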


\section{Updates in the Data Structure of Theorem~\ref{theor:1d}}
\label{sec:rebal}

Suppose that an interval $s$  is inserted into $S$. 
The insertion procedure consists of two parts. 
First, we insert $s$ into lists $\Comp(v_i)$  
and update  $\Comp(v_i)$ for all relevant nodes $v_i$ of $\cT$.
Then, we insert $s$ into  data structures $D(v_i)$ and $M(v_i)$. 

Using the data structure $R$, we can identify the segment $s'(v_R)$
that precedes $s$ in $\oS(v_R)$. Then, we find the position of $s'(v_R)$ in
$\Comp(v_R)$.  We also find the leaves $v_a$ and $v_b$ of $\cT$, in
which the left and the right endpoints of $s$ must be stored.

The path $\pi=v_R,\ldots, v_a$ is
traversed starting at the root node.  Let $s'(v_i)$ be the segment that 
precedes $s$ in $\Comp(v_i)$ and let $B(v_i)$ be the block that contains 
$s'(v_i)$. In every node $v_i$, we insert
the identifier for $s$ after $s'(v_i)$. We also update data 
structures $F(B(v_i))$ and $H(B(v_i))$ for the block $B(v_i)$ that contains 
$s$.  If the number of intervals in
$B(v_i)$ equals $2\log^3n$, we split the block $B(v_i)$ into two blocks
$B_1(v_i)$ and $B_2(v_i)$ of equal size.  We assign a new label to
$B_2(v_i)$ and update the labels for some blocks of $\Comp(v_i)$ 
in $O(\log^2n)$ time.
We also may have to update $O(\log^{\eps}n)$ split-find data
structures $P_f(v_i)$.  Besides that, $O(1)$ child pointers in the parent of
$v_i$ and $O(\log^{\eps}n)$ parent pointers in the children of $v_i$
may also be updated.  We split the block after $\Theta(\log^3 n)$
insertions; hence, an amortized cost of splitting a block is $O(1)$.
When $s$ is inserted into $\Comp(v_i)$, we find the largest element
$s'(v_{i-1})\leq s$, such that $s'(v_{i-1})$ is also stored in
$\Comp(v_{i-1})$. 
Then, we find the position of $s'(v_{i-1})$ in
$\Comp(v_{i-1})$ and proceed in the node $v_{i-1}$.

 Let $v_c$ be the
lowest common ancestor of $v_a$ and $v_b$, and let $\pi_b$ be the path
from $v_c$ to $v_b$.  
We also insert $s$ into $\Comp(v_j)$ for all $v_j\in\pi_b$; 
nodes $v_j\in\pi_b$ are processed in the same way as nodes $v_i\in \pi_a$

During the second stage, we update data structures $D(v_i)$ and $M(v_i)$ 
for $v_i\in \pi_a\cup\pi_b$. For any such $v_i$ we already know the 
block $B\in \Comp(v_i)$ that contains $s$ and the position of 
$s$ in $B$. Hence, a data structure $G(B)$ can be updated in $O(1)$ 
time.
If $s$ is the only interval in $B$  that belongs to 
$S_{lr}(v_i)$ for some $l$, $r$, we insert $\clab(B)$ into 
$D_{lr}(v_i)$. 
If $\clab(B)$ is the greatest label in $D_{lr}(v_i)$,
we also update $M(v_i)$. 
An insertion into a data structure $D_{lr}(v_i)$, $v_i\in \pi_a\cup \pi_b$, 
takes $O(\log\log n)$ time. Updates of all other structures in 
$v_i$ take $O(1)$ time. Hence, an insertion takes $O(\log n)$ time in total.

When an interval is deleted, we identify its position in every relevant node 
$v_i$. This can be done in the same way as during the first stage of
 the insertion procedure.  Then, we update the data structures
$D(v_i)$, $M(v_i)$, and $G(B(v_i))$ if necessary. 
Since a block label can be removed 
from $D_{lr}(v_i)$ in $O(1)$ time, the total time necessary to delete 
an interval  is $O(\log n/\log\log n)$.

{\bf Re-Balancing of the Tree Nodes.}
It remains to show how the tree can be rebalanced after 
updates, so that the height of $\cT$ remains $O(\log n/\log\log n)$.
We implement the base tree $\cT$ as a weight-balanced B-tree~\cite{AV03}
with branching parameter $\phi$ and leaf parameter $\phi$ for 
$\phi=\log^{\eps} n$. We assume that the constant  $\eps< 1/8$. 

When a segment $s$ is  deleted, we simply mark  it as deleted. 
After $n/2$ deletions, we re-build the base tree and all data structures 
stored in the nodes.   This can be done in $O(n\log n/\log \log n)$ time. 

Now we show how insertions can be handled. 
We denote by the weight of $u$ the total number of elements in the leaf descendants of $u$. The weight $n_u$ of $u$ also equals to the number of 
segment identifiers in $\Comp(u)$.  
Our  weight-balanced B-tree is organized in such way that 
$n_u=\Theta(\phi^{\ell+1})$, where 
$\phi=\log^{\eps}n$ and $\ell$ is the level of a node $u$. 
A node is split after $\Theta(\phi^{\ell+1})$ insertions; when a node 
$u$ is split into $u'$ and $u''$, the ranges of the other nodes in 
the base tree do not change; see~\cite{AV03} for a detailed description of 
node splitting. 
 We will show  that 
all relevant data structures can be re-built in $O(n_u)$ time. 
Hence, the amortized cost of splitting a node is $O(1)$. 
When a segment is inserted, it is inserted into $O(\log n/\log \log n)$ 
nodes of the base tree. 
Hence, the total amortized cost of all splitting operations caused by 
inserting a segment $s$ into our data structure is $O(\log n/\log \log n)$. 

It remains to show how secondary data structures are updated when a node 
$u$ is split. Let $w$ be the parent of $u$. The total number of segments in 
$S(u')$ and $S(u'')$ does not exceed $|S(u)|=O(n_u)$. Hence, we can construct 
data structures $D(u')$, $D(u'')$ and lists $\Comp(u')$, 
$\Comp(u'')$   with all block data structures in $O(|S(u)|)=O(n_u)$ time.
We can find the positions of all elements $e\in \Comp(u')\cup \Comp(u'')$ 
in $\Comp(w)$, update their identifiers in $\Comp(w)$, and update 
all auxiliary data structures in $O(n_u)$ time. 
\begin{figure}[tb]
\centering
\begin{tabular}{ccc}
\includegraphics[width=.4\textwidth]{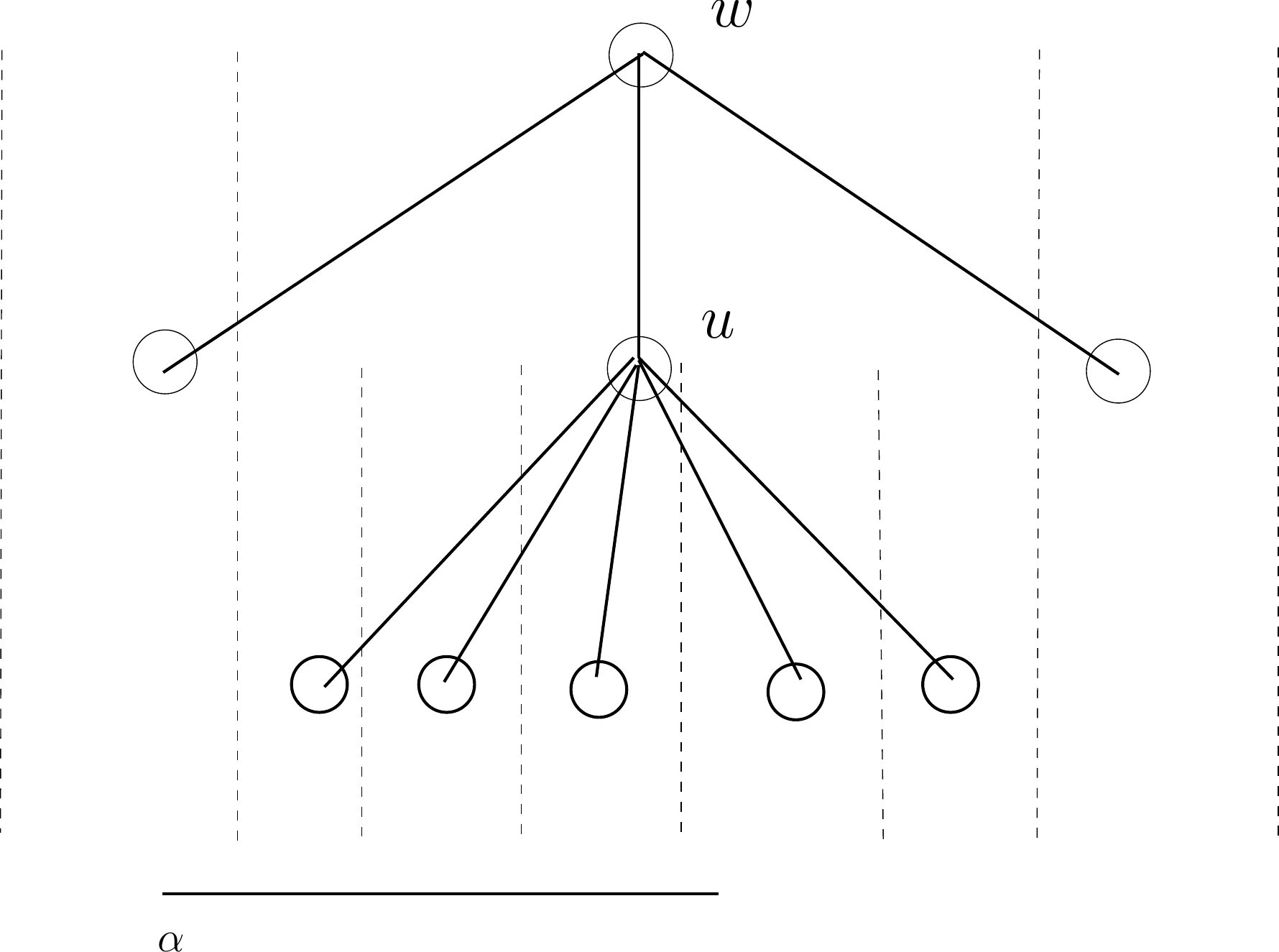} & \hspace*{2cm} &
\includegraphics[width=.4\textwidth]{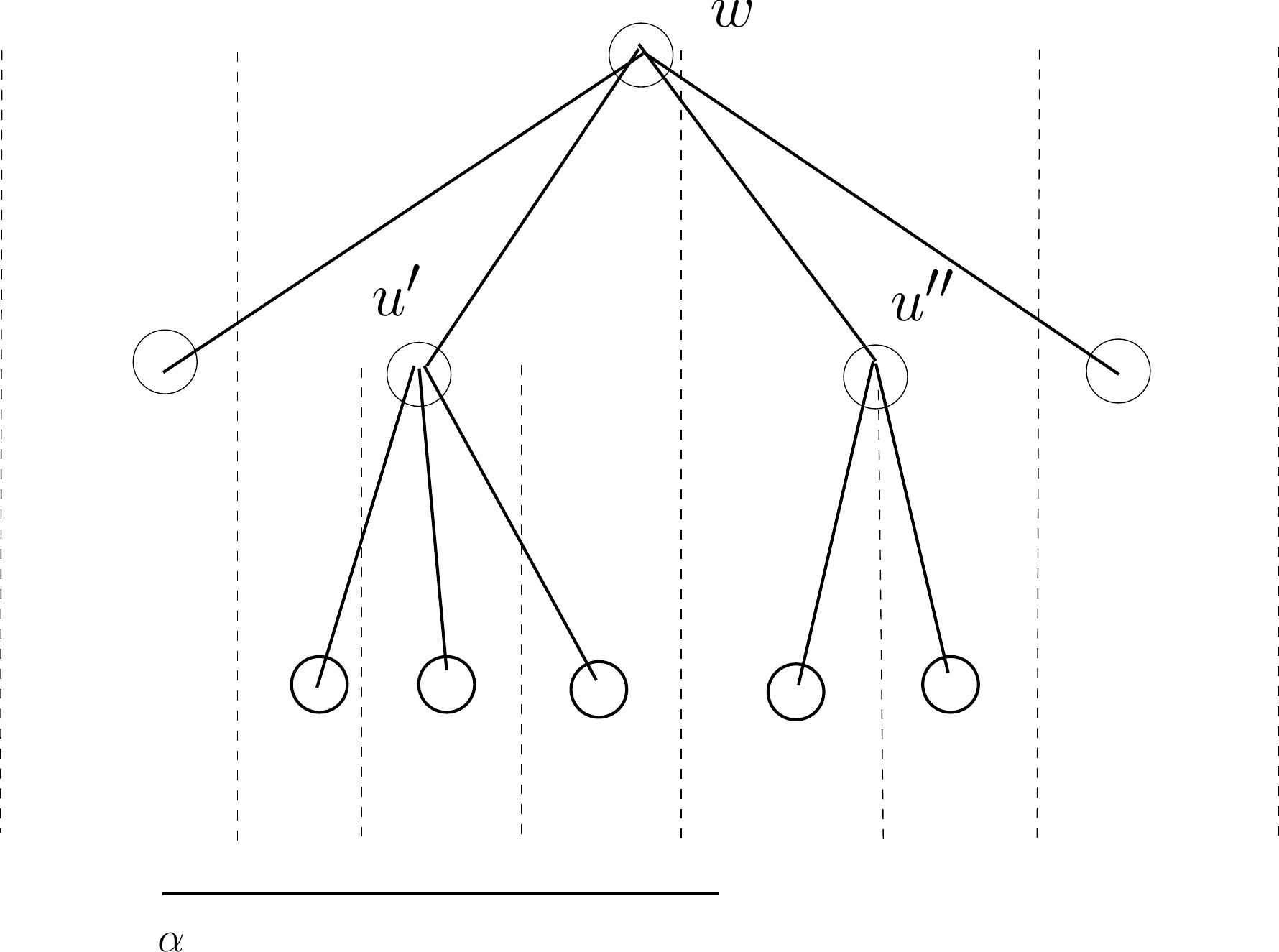} \\
{\bf (a)}   &  & {\bf(b)} \\
\end{tabular}
\caption{\label{fig:split} 
Node $u$ is split into $u'$ and $u''$. 
Segment $\alpha$ is moved from $S(u)$ to $S(w)$ after splitting of a node $u$.
}
\end{figure}

Some of the intervals stored in $S(u)$ can be moved from $S(u)$ to $S(w)$: 
if an interval $s \in S(u)$ does not cover $rng(u)$ but covers $rng(u')$ 
or $rng(u'')$, then $s$ must be stored in $S(w)$ after splitting. 
See Fig.~\ref{fig:split} for an example. 
The total number of elements in $\Comp(w)$ is 
$\Theta(\phi^{\ell+2})=\Theta(n_u\cdot \phi)$; hence,  the total number of 
blocks in $\Comp(w)$ is $o(n_u/\log^3 n)$. 
Identifiers of all  segments in  $S(u)$ are already stored in $\Comp(w)$. 
We can update all block data structures in $O(1)$ time per segment. 
Every update of the data structure $D(w)$ takes $O(\log \log n)$ time. 
But the total number of insertions into $D(w)$ does not exceed the number 
of blocks in $D(w)$. Therefore the total time needed to update $D(w)$ is
$O(n_u\log^{\eps-3}n)=o(n_u)$.

Suppose that $u$ was the $k$-th child of $w$. Some segments stored in 
$S_{lk}(w)$ for $l<k$ can be moved to $S_{l(k+1)}(w)$, and some segments in 
$S_{kr}(w)$ for $r>k$  can be moved to $S_{(k+1)r}(w)$.  Since the total 
number of blocks in $\Comp(w)$ is $O(n_u/\log^3 n)$,  all updates of 
data structures $D_{ij}(w)$ take $O((n_u/\log^3 n)\log\log n)=o(n_u)$ time. 
At most $n_u$  segments were stored in $S_{lk}(w)$ and $S_{kr}(w)$ 
before the split of $u$. Hence, we can update all block data structures 
in $O(n_u)$ time.  

Finally, we must change the identifiers of segments stored in $\Comp(w)$
and data structures $G(B)$.  Recall that an identifier indicates in which 
children of $w$ a segment $s$ is stored.  
Suppose that a segment $s$ had an identifier $r>k$ in $\Comp(w)$. 
Then its identifier will be changed to $r+1$ after the split of $u$. 
The total number of segments with incremented identifiers does not exceed 
$n_w=O(n_u\cdot\phi)$. 
However, each identifier is stored in $O(\log\log n)$ bits. We can use 
this fact, and increment the values of $\sqrt{\log n}$ identifiers 
in $O(1)$ time using the following look-up table $T$. 
There are $O((\log n)^{\sqrt{\log n}})$ different sequences of $\sqrt{\log n}$ 
identifiers. For every such sequence $\alpha$ and  any $j=O(\log^{\eps}n)$,
 $T[\alpha][j]=\alpha'$. Here $\alpha'$ is the sequence that we obtain 
if every $j'>j$  in the sequence $\alpha$ is replaced with $j'+1$. 
Thus the list $\Comp(w)$ can be updated in $O(n_w/\sqrt{\log n})=O(n_u)$ 
time. 
Using the same approach, we can also update data structures $F(B)$ and $G(B)$ 
for each block $B$ in $O(|B|/\sqrt{\log n})$ time. 
Hence, the total time necessary to update all data structures in $w$ because 
some identifiers must be incremented is $O(n_u)$. 

Since all data structures can be updated in $O(n_u)$ time when a node $u$ is
split, the total amortized cost of an insertion is $O(\log n)$.

\section{Multi-Dimensional Stabbing-Max Queries}
\label{sec:multidim}
Our data structure can be extended to the case of $d$-dimensional
stabbing-max queries for $d>1$. In this section we prove the following 
theorem.
\begin{theorem}
  \label{theor:multidim}
  There exists a data structure that uses $O(n(\log n/\log \log
  n)^{d-1})$ space and answers $d$-dimensional stabbing-max queries in
  $O((\log n/\log\log n)^{d})$ time. Insertions and deletions are
  supported in $O((\log n/\log\log n)^{d}\log\log n)$ and $O((\log
  n/\log\log n)^{d})$ amortized time respectively.
\end{theorem}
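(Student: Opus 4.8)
The plan is to argue by induction on $d$, taking Theorem~\ref{theor:1d} as the base case $d=1$ and, for $d>1$, composing a base tree on the $d$-th coordinate with a $(d-1)$-dimensional stabbing-max structure (supplied by the induction hypothesis) in each of its nodes. Project all rectangles of $S$ onto the $d$-th coordinate and build a base tree $\cT_d$ exactly as in Section~\ref{sec:stabover}: a weight-balanced B-tree of branching and leaf parameter $\log^\eps n$ and height $O(\log n/\log\log n)$, with the sets $S(u)$ and their per-child refinements $S_{lr}(u)$ defined with respect to the $d$-th-coordinate projections. As in the one-dimensional analysis each rectangle lands in only $O(\log n/\log\log n)$ sets $S(u)$, so $\sum_u|S(u)|=O(n\log n/\log\log n)$. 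In every node $u$ we store, in place of the VEB/compact structures $D(u)$, a $(d-1)$-dimensional stabbing-max structure $D_{d-1}(u)$ on the projections of $S(u)$ onto the first $d-1$ coordinates; the $O(\log^{2\eps}n)$-size index $M(u)$ over pairs $l\le r$ is retained so that the condition ``$q$ stabs some rectangle of $S(u)$ along the $d$-th axis'' is still resolved by one index query. (The compact representation of Section~\ref{sec:compact} is needed only at the innermost level, where Theorem~\ref{theor:1d} already incorporates it; at the $d-1$ outer levels the per-node cost is dominated by the recursive call, so no analogous device is required.) Writing $r=\log n/\log\log n$ and using that replacing $n$ by $n\cdot\mathrm{polylog}\,n$ changes $r$ by at most a constant factor, $S_d(n)=\sum_u S_{d-1}(|S(u)|)=O(r^{d-2})\sum_u|S(u)|=O(nr^{d-1})$, as claimed.

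A query $q=(q_1,\ldots,q_d)$ is answered by following the search path of $q_d$ in $\cT_d$ and, at each of its $O(r)$ nodes $u$, querying $D_{d-1}(u)$ with $(q_1,\ldots,q_{d-1})$ under the index constraint $l\le f\le r$ obtained from $M(u)$; the output is the highest-priority rectangle over the whole path. Hence $T_d(n)=O(r)\cdot T_{d-1}(n\cdot\mathrm{polylog}\,n)+O(r)=O(r)\cdot O(r^{d-1})=O(r^d)$. For updates, inserting a rectangle means inserting its projection into $D_{d-1}(u)$ for the $O(r)$ relevant nodes $u$ (identified along two root-to-leaf paths of $\cT_d$ exactly as in Section~\ref{sec:rebal}) together with $O(1)$ bookkeeping per node, so $I_d(n)=O(r)\cdot I_{d-1}+O(r)=O(r^{d-1})\cdot O(\log n)=O(r^d\log\log n)$, using the identity $\log n=r\log\log n$; a deletion updates the $O(r)$ per-node structures at $O(1)$ cost above the recursive call, giving $D_d(n)=O(r)\cdot D_{d-1}=O(r^d)$. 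Unrolled, the additive $O(r^k)$ bookkeeping terms sum to $O(r^{d-1})$ and are dominated, so all stated bounds follow once rebalancing is accounted for.

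The step that needs the most care — and which I expect to be the main obstacle — is rebalancing across the $d$ nested trees. When a node $u$ at level $\ell$ of $\cT_d$ splits (after $\Theta(\phi^{\ell+1})=\Theta(n_u)$ insertions), its secondary structure and those of its parent whose $S_{lr}$-indices shift by one must be rebuilt; building a $(d-1)$-dimensional structure directly, bottom up, in time proportional to its size costs $O(n_u r^{d-2})$, i.e.\ $O(r^{d-2})$ amortized per insertion routed through $u$, and summing over the $O(r)$ nodes of an insertion path and then recursively over the dimensions gives a geometric series dominated by $O(r^{d-1})$, comfortably within the $O(r^d\log\log n)$ insertion budget; a global rebuild after $n/2$ updates, performed in $O(nr^{d-1})$ time and hence $O(r^{d-1})$ amortized per update, keeps the deletion-count invariants of Sections~\ref{sec:compact}--\ref{sec:rebal} valid at every level. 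The remaining work is a routine lift of the one-dimensional argument: verifying that at each level every auxiliary index ($M(u)$, the $S_{lr}$-refinements, and, at the innermost level, the compact lists and block structures) can be (re)built in near-linear time, and that the polylogarithmic growth of the input size to the recursive structures leaves $r$ unchanged up to constants, so that the recursion closes to the bounds stated in the theorem.
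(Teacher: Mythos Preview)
Your induction has a genuine gap at the combining step. At a node $u$ of the outer base tree, you need the highest-priority rectangle in $S(u)$ that is stabbed by $q$ in \emph{all} $d$ coordinates. The $d$-th-coordinate condition translates to ``$s\in S_{lr}(u)$ for some $l\le f\le r$'' (where $v_{i-1}$ is the $f$-th child of $u$), and the remaining condition is that $(q_1,\ldots,q_{d-1})$ stabs $s$'s projection. Your proposal stores a single plain $(d-1)$-dimensional structure $D_{d-1}(u)$ on \emph{all} of $S(u)$ together with the index $M(u)$, but these answer two unrelated questions: $D_{d-1}(u)$ returns the max-priority rectangle stabbed in the first $d-1$ axes regardless of whether $q_d$ stabs it, while $M(u)$ returns the max-priority rectangle with $l\le f\le r$ regardless of the first $d-1$ axes. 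Neither gives the maximum subject to \emph{both} constraints, and no $O(1)$ combination of the two does either. Keeping $O(\log^{2\eps}n)$ separate $(d-1)$-dimensional structures, one per pair $(l,r)$, and querying all those with $l\le f\le r$ would cost an extra $\Theta(\log^{2\eps}n)$ factor per level and break the query bound.

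The paper closes this gap by strengthening the induction hypothesis. It introduces the $(d,g)$-stabbing problem, where the last $g$ coordinates live in a universe of size $\log^{\rho}n$, and shows (Lemma~\ref{lemma:1g}) that the one-dimensional machinery of Theorem~\ref{theor:1d} extends to $(1,g)$-stabbing at no asymptotic cost: the small coordinates are packed into the identifiers and absorbed by the same word-level tables that implement $M(u)$ and $G(B)$. Lemma~\ref{lemma:dg} then reduces $(d,g)$ to $(d-1,g+1)$: in each node $u$ of the base tree on the $d$-th coordinate, a rectangle $s\in S_{lr}(u)$ is stored with its $d$-th projection \emph{replaced} by the small interval $[l,r]$, so the child-index constraint becomes one more small-universe dimension that the inner structure filters on directly. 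Starting from $(1,d-1)$ and applying Lemma~\ref{lemma:dg} $d-1$ times yields the theorem. Your plain-$(d-1)$ hypothesis is too weak to carry the child-index constraint through the recursion; this strengthening is precisely the missing idea.
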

 Let $(d,g)$-stabbing problem denote
the $(d+g)$-dimensional stabbing-max problem for the case when the
first $d$ coordinates can assume arbitrary values and the last $g$
coordinates are bounded by $\log^{\rho}n$, for
$\rho\leq\frac{1}{8(g+d)}$.  First, we show that the data structure of
Theorem~\ref{theor:1d} can be modified to support $(1,g)$-stabbing
queries for any constant $g$. Then, we will show how arbitrary
$d$-dimensional queries can be answered. Throughout this section
$proj_i(s)$ denotes the projection of a rectangle $s$ on the $i$-th
coordinate. We denote by $b_i(s)$ and $e_i(s)$ the $i$-th coordinates
of endpoints in a rectangle $s$, so that $proj_i(s)=[b_i(s),e_i(s)]$.

{\bf A Data Structure for the $(1,g)$-Stabbing Problem.}  The solution of
the $(1,g)$-stabbing problem is very similar to our solution of the
one-dimensional problem.  We construct the base tree $\cT$ on the
$x$-coordinates of all rectangles. $\cT$ is defined as in
section~\ref{sec:stabover}, but we assume that $\eps\leq \rho$.  Sets
$S(u)$, $S_{lr}(u)$, and $\oS(u)$ are defined as in
sections~\ref{sec:stabover},~\ref{sec:compact}.  We define
$S_{lr}[j_1,\ldots,j_g,\ldots,j_{2g}](u)$ as the set of rectangles $s$
in $S_{lr}(u)$ such that $b_2(s)=j_1$, $\ldots$, $b_{g+1}(s)=j_g$,
$e_2(s)=j_{g+1}$, $\ldots$, $e_{g+1}(s)=j_{2g}$.

A rectangle identifier for a rectangle $s$ stored in a list $\Comp(u)$
consists of one or two tuples with $2g+1$ components. The first
component of each tuple is an index $j$, such that $s$ belongs to the
$j$-th child of $u$; remaining $2g$ components are the last $g$
coordinates of the rectangle endpoints. The data structure $M(u)$
contains the maximum priority rectangle in
$S_{lr}[j_1,\ldots,j_{2g}](u)$ for each $l,r,j_1,\ldots, j_{2g}$.
$M(u)$ can find for any $(f,h_1,\ldots, h_g)$ the highest priority
rectangle in all $S_{lr}[j_1,\ldots,j_{2g}](u)$, such that $l\leq
f\leq r$, and $j_i\leq h_i\leq j_{g+i}$ for $1\leq i \leq g$.  We can
implement $M(u)$ as for the one-dimensional data structure.  Data
structure $D(u)$ consists of VEB structures
$D_{lr}[j_1,\ldots,j_{2g}](u)$.  If a block $B$ of $\Comp(u)$ contains
a rectangle from $S_{lr}[j_1,\ldots,j_{2g}](u)$, then $\clab(B)$ is
stored in $D_{lr}[j_1,\ldots,j_{2g}](u)$.

We can answer a query $q=(q_x,q_1,\ldots,q_g)$ by traversing the
search path $\pi=v_0,\ldots, v_R$ for $q_x$ in the base tree $\cT$. As
in Theorem~\ref{theor:1d}, we start at the leaf node $v_0$ and find
the maximal rectangle stored in $S(v_l)$ that is stabbed by  the query point
$q$.  The search procedure in nodes $v_1,\ldots, v_R$ is organized in
the same way as the search procedure in section~\ref{sec:compact}:
The rectangle $s(v_i)$ is defined as in section~\ref{sec:compact}.  If
$v_{i-1}$ is the $f$-th child of $v_i$, we answer the query
$(f,q_1,\ldots,q_g)$ using the data structure $M(v_i)$.  Then we visit
the block $B_m$ returned by $M(v_i)$ and find the last rectangle $s_m$
in $B_m$ with an identifier
$(f,b_2(s_m),\ldots,b_{g+1}(s_m),e_2(s_m),\ldots,e_{g+1}(s_m))$ such
that $b_{i+1}(s_m)\leq q_i\leq e_{i+1}(s_m)$ for $1\leq i \leq g$.
Finally we compare $s_m$ with the rectangle $s(v_{i-1})$ by comparing
their positions in $\Comp(v_i)$. Thus we can find the position of
$s(v_i)$ in $\Comp(v_i)$.  When we reach the root $v_R$, $s(v_R)$ is
the highest priority segment that is stabbed by $q$.

Updates of the list $\Comp(u)$ and all auxiliary data structures are
implemented as in section~\ref{sec:rebal}.
\begin{lemma}
  \label{lemma:1g}
  There exists a data structure that answers $(1,g)$-stabbing queries
  in $O(\log n/\log\log n)$ time and uses $O(n)$ space. Insertions and
  deletions are supported in $O(\log n)$ and $O(\log n/\log \log n)$
  amortized time respectively.
\end{lemma}

\tolerance=1500 {\bf A Data Structure for the $(d,g)$-Stabbing
  Problem.}  The result for $(1,g)$-stabbing can be extended to
$(d,g)$-stabbing queries using the following Lemma.
\begin{lemma}
  \label{lemma:dg}
  Suppose that there is a $O(n(\log n/\log \log n)^{d-2})$ space data
  structure $D_1$ that answers $(d-1,g+1)$-stabbing queries in $O((\log
  n/\log \log n)^{d-1})$ time; $D_1$ supports insertions and deletions
  in $O((\log n/\log \log n)^{d-1}\log\log n)$ and $O((\log n/\log
  \log n)^{d-1})$ amortized time
  respectively. \\
  Then there exists a $O(n(\log n/\log \log n)^{d-1})$ space data
  structure $D_2$ that answers $(d,g)$-stabbing queries in $O((\log
  n/\log \log n)^{d})$ time; $D_2$ supports insertions and deletions
  in amortized time $O((\log n/\log \log n)^{d}\log\log n)$ and 
  $O((\log n/\log \log n)^{d})$  respectively.
\end{lemma}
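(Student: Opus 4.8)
The standard way to boost a $(d-1,g+1)$-stabbing structure into a $(d,g)$-stabbing structure is a \emph{range-tree / segment-tree over the $d$-th coordinate} combined with the observation that a one-level segment tree turns an interval constraint on one coordinate into a "covers the range of a node'' constraint, which is exactly what lets us shift that coordinate from the "arbitrary'' group into the "bounded'' group at each node. Concretely, I would build a base tree $\cT_d$ on the $x_d$-coordinates of all rectangles, structured exactly as the base tree of Section~\ref{sec:stabover}: branching and leaf parameter $\phi=\log^\eps n$ with $\eps\le\frac{1}{8(g+d)}$, height $O(\log n/\log\log n)$, implemented as a weight-balanced B-tree so that node splits cost $O(1)$ amortized. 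Each rectangle $s$ is associated, as in Section~\ref{sec:stabover}, with $O(\log n/\log\log n)$ nodes $u$ of $\cT_d$: the nodes where $proj_d(s)$ covers the range of some child but not the range of $u$, plus the two leaves holding the endpoints of $proj_d(s)$. In a node $u$ that is the $f$-th child of its parent, a rectangle $s$ assigned to $u$ with child-index range $[l,r]$ satisfies $l\le f\le r$ along the $d$-th coordinate; so from the point of view of the secondary structure sitting at a node, the $d$-th coordinate behaves like one of the bounded coordinates (it takes one of $O(\log^\eps n)$ "block-index'' values $l$ and one value $r$). Thus in every node $u$ I store a copy of the $(d-1,g+1)$-stabbing structure $D_1$ from the hypothesis, where the extra $(g+1)$-st bounded coordinate pair $(l,r)$ records the child-index interval of $s$ at $u$.

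\textbf{Queries.} Given a query $q=(q_1,\dots,q_d,q_{d+1},\dots,q_{d+g})$, I descend the search path $\pi=v_0,\dots,v_R$ for $q_d$ in $\cT_d$; this has length $O(\log n/\log\log n)$. At the leaf $v_0$ I examine all $O(\log^\eps n)$ rectangles with an endpoint there and, for each, query whether it is stabbed in the remaining $d-1+g$ coordinates — or rather, I maintain the running best answer $s(v_i)$ exactly as in Section~\ref{sec:compact}, using the compact-list machinery so that $s(v_i)$ is represented by its position in $\Comp(v_i)$ rather than explicitly. At a node $v_i$ that has $v_{i-1}$ as its $f$-th child, a rectangle stored at $v_i$ is stabbed by $q$ iff its child-index interval $[l,r]$ contains $f$ and its remaining coordinates are stabbed by $(q_1,\dots,q_{d-1},q_{d+1},\dots,q_{d+g})$; this is precisely a $(d-1,g+1)$-stabbing query to $D_1(v_i)$ with the bounded-coordinate value $f$ supplied for the newly created bounded coordinate. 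I then combine the rectangle returned by $D_1(v_i)$ with $s(v_{i-1})$ by comparing their priorities, which — by the compact-list invariants carried over from Section~\ref{sec:compact} — costs $O(1)$ once their positions in $\Comp(v_i)$ are known (and the position of $s(v_{i-1})$ in $\Comp(v_i)$ is obtained from its position in $\Comp(v_{i-1})$ in $O(1)$ time via parent pointers). Each node therefore costs one $(d-1,g+1)$-query, i.e.\ $O((\log n/\log\log n)^{d-1})$, so the total query time is $O((\log n/\log\log n)^d)$, as required.

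\textbf{Space and updates.} Each rectangle appears in $O(\log n/\log\log n)$ copies of $D_1$, and a copy of $D_1$ on $m$ rectangles uses $O(m(\log n/\log\log n)^{d-2})$ space by hypothesis, so the total is $O(n(\log n/\log\log n)^{d-1})$; the compact-list and block structures add only $O(n\log n)$ bits as in Section~\ref{sec:compact}, which is absorbed. An insertion touches $O(\log n/\log\log n)$ nodes, performing one insertion into $D_1$ (cost $O((\log n/\log\log n)^{d-1}\log\log n)$) plus $O(1)$ list/pointer work at each, giving $O((\log n/\log\log n)^d\log\log n)$; a deletion is symmetric with $D_1$-deletions costing $O((\log n/\log\log n)^{d-1})$, giving $O((\log n/\log\log n)^d)$ amortized. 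Node splits in the weight-balanced B-tree $\cT_d$ move $O(n_u)$ rectangles from $S(u)$ to $S(w)$ and re-index child-index intervals at the parent $w$; rebuilding the two child copies of $D_1$ costs $O(n_u(\log n/\log\log n)^{d-1})$ and re-inserting the displaced rectangles into $D_1(w)$ costs $O(n_u(\log n/\log\log n)^{d-1}\log\log n)$, which charged over the $\Theta(n_u)$ insertions triggering the split is $O((\log n/\log\log n)^{d-1}\log\log n)$ amortized per insertion per level, hence $O((\log n/\log\log n)^d\log\log n)$ overall — matching the claimed bound. Periodic global rebuild after $n/2$ deletions keeps the deleted-identifier count under control, exactly as in Section~\ref{sec:rebal}.

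\textbf{Main obstacle.} The delicate point is making the per-node cost genuinely $O((\log n/\log\log n)^{d-1})$ rather than paying an extra $O(\log\log n)$ or $O(\log^\eps n)$ factor at each of the $O(\log n/\log\log n)$ levels, which would blow up the query time. This forces the query at $v_i$ to hand off to $D_1(v_i)$ the "$f$-th child'' information \emph{for free} — i.e.\ the child index $f$ must already be an honest bounded coordinate of $D_1$, so that $D_1(v_i)$ internally resolves the constraint $l\le f\le r$ by a single query rather than by iterating over $O(\log^\eps n)$ values of $l$. This is why the reduction is phrased as $(d-1,g+1)\to(d,g)$: the new bounded coordinate of $D_1$ is reused to carry the segment-tree child index, and the base case $(1,g)$ of Lemma~\ref{lemma:1g} is exactly the gadget that knows how to answer such a query in $O(1)$ per node via the $M(u)$ structure. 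I also need the compact representation to extend verbatim: rectangle identifiers now carry $2g$ extra bounded coordinates instead of $g$, but each still fits in $O(\log\log n)$ bits since $\rho\le\frac{1}{8(g+d)}$ ensures $\log^{(2g)\rho}n=n^{o(1)}$ stays within a machine word, so the block data structures $F(B),G(B),H(B)$ and the look-up-table trick for re-indexing go through unchanged. Iterating the Lemma $d-1$ times starting from Lemma~\ref{lemma:1g} yields a $(d,0)$-stabbing structure, i.e.\ the ordinary $d$-dimensional stabbing-max structure of Theorem~\ref{theor:multidim}.
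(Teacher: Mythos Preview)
Your reduction is essentially the paper's: build the base segment tree on the $d$-th coordinate, store a $(d-1,g+1)$-structure $\cD(u)$ at each node in which the $d$-th projection is replaced by the child-index interval $[l,r]$, and answer a query by walking the search path and posing $(q_1,\dots,q_{d-1},f,q_{d+1},\dots,q_{d+g})$ at each node. The one point on which you overshoot is the compact-list machinery: the paper does \emph{not} carry $\Comp$ lists, parent pointers, or block structures through this lemma, because $\cD(v_i)$ is a black box that returns an honest rectangle with its priority, so taking the maximum of the $O(\log n/\log\log n)$ per-node answers is a direct priority comparison with no need for positional encodings; that apparatus lives only in the base case (Lemma~\ref{lemma:1g}).
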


The main idea is to construct the base tree $\cT$ on the $d$-th
coordinates of rectangles and store a data structure for
$(d-1,g+1)$-stabbing queries in each tree node. 
The tree is organized as in section~\ref{sec:stabover}
and the first part of this section.  Let $q_d$ denote the $d$-th
coordinate of a point.  Leaves contain $d$-th coordinates of rectangle
endpoints.  A rectangle $s$ is stored in a set $S(u)$ for a leaf $u$
if $proj_d(s)\cap rng(u)\not=\emptyset$ and $rng(u)\not\subset
proj_d(s)$.  A rectangle $s$ is stored in a set $S(u)$ for an internal
node $u$ if $rng(u)\not\subset proj_d(s)$ and $rng(u_i)\subset
proj_d(s)$ for at least one child $u_i$ of $u$.  $S_{ij}(u)$ is the
set of all intervals $s\in S(u)$ such that $rng(u_f)\subset proj_d(s)$
for a child $u_f$ of $u$ if and only if $i\leq f\leq j$.  We store a
data structure $\cD(u)$ for $(d-1,g+1)$-stabbing queries in each
internal node $u$. For a rectangle $s\in S_{lr}(u)$, $\cD(u)$ contains
a rectangle $s'$ such that $proj_g(s)=proj_g(s')$ for $g\not=d$ and
$proj_d(s')=[l,r]$.  In other words, we replace the $d$-th coordinates
of $s$'s endpoints with $l$ and $r$.

A query $q$ is answered by traversing the search path for $q$. 
For a leaf node $v_0$, we examine all rectangles in $S(v_0)$ and 
find the highest priority rectangle in $S(v_0)$. 
In an internal node $v_i$, $i\geq 1$, we answer the query 
$q(v_i)=(q_1,\ldots,q_{d-1},f,q_{d+1},\ldots,q_{d+g})$ using the data structure 
$\cD(v_i)$. The query $q(v_i)$ is obtained from $q$ by replacing 
the $d$-th coordinate with an index $f$, such that $v_{i-1}$ is the $f$-th 
child of 
$v_i$. When a node $v_i$ is visited, our procedure finds the highest priority 
rectangle $s(v_i)$ that is stored in $S(v_i)$ and is stabbed by $q$. 
All rectangles that are stabbed by $q$ are stored in some $S(v_i)$. 
Hence, the highest priority rectangle stabbed by $q$ is the maximum rectangle 
among all $s(v_i)$.
Since our procedure spends  $O((\log n/\log\log n)^{d-1})$ time in each 
node, the total query time is $O((\log n/\log\log n)^{d})$. 

When a rectangle $s$ is inserted or deleted, we update $O(\log n/\log\log n)$ 
data structures $\cD(u)$ 
in which $s$ is stored. Hence, deletions and insertions  are supported in 
$O((\log n/\log\log n)^{d})$ and $((\log n/\log\log n)^{d}\log\log n)$ time
respectively. We can re-balance the base tree using a procedure that is similar 
to the procedure described in section~\ref{sec:rebal}.

{\bf Proof of Theorem~\ref{theor:multidim}.}
Theorem~\ref{theor:multidim} follows easily from Lemmas~\ref{lemma:1g}
 and~\ref{lemma:dg}.
By Lemma~\ref{lemma:1g}, there exists a linear space data structure
that answers $(1,d-1)$-stabbing queries in $O(\log n/\log\log n)$
time.  We obtain the main result for $d$-dimensional stabbing-max
queries by applying the result of Lemma~\ref{lemma:dg} to the data
structure for $(1,d-1)$-stabbing queries.

\section{Stabbing-Sum Queries}
\label{sec:stabsum}
We can also modify our data structure so that it supports 
stabbing-sum queries: count the number of intervals stabbed by a 
query point $q$. 
\begin{theorem}
\label{theor:sum1d}
There exists a linear space data structure that answers orthogonal 
stabbing-sum queries in $O(\log n/\log \log n)$ time. This data structure 
supports insertions and deletions in $O(\log n)$ and $O(\log n/\log \log n)$ 
amortized time respectively.
\end{theorem}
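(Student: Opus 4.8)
The plan is to re-use essentially all of the machinery of Theorem~\ref{theor:1d}, replacing the "maximum priority" bookkeeping in each node by "count" bookkeeping. First I would keep exactly the same base tree $\cT$, the same sets $S(u)$ and $S_{lr}(u)$, and the same compact lists $\Comp(u)$ with all block data structures $F(B)$, $H(B)$, the split-find structures $P_f(u)$, the parent/child pointers, and the root/leaf tables. The key observation is that a query point $q$ stabs an interval $s\in S(v_i)$ iff $v_{i-1}$ is the $f$-th child of $v_i$ and $s\in S_{lr}(v_i)$ with $l\le f\le r$; hence the total number of intervals stabbed by $q$ equals $\sum_{i} |\{s\in S(v_i): s\in S_{lr}(v_i),\ l\le f_i\le r\}|$, where $f_i$ is the index of $v_{i-1}$ as a child of $v_i$ and the sum ranges over the leaf-to-root path $\pi=v_0,\ldots,v_R$. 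So instead of propagating a running maximum, the search procedure accumulates a running sum of per-node contributions.

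Next I would replace $M(u)$ and $D(u)$ by counting analogues. In each internal node $u$ I would store, for every pair $l\le r$, the count $c_{lr}(u)=|S_{lr}(u)|$; the data structure $M(u)$ is replaced by a structure that, given a query index $f$, returns $\sum_{l\le f\le r} c_{lr}(u)$ in $O(1)$ time. Since $u$ has $\Theta(\log^\eps n)$ children there are only $O(\log^{2\eps}n)$ pairs $(l,r)$, so this is a tiny RAM structure supporting $O(1)$ queries and $O(1)$ updates exactly as in Appendix~A — a two-dimensional prefix-sum table over the $O(\log^\eps n)\times O(\log^\eps n)$ grid of $(l,r)$ pairs, rebuilt in $O(\log^{2\eps}n)=o(\log^\eps n)$ time whenever a single count changes (or updated incrementally). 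The VEB structures $D_{lr}(u)$ are no longer needed for queries, but a count per $(l,r)$ pair suffices; on insertion of $s$ into $S_{lr}(v_i)$ we increment $c_{lr}(v_i)$ and update the node structure in $O(1)$ time, and on deletion we decrement. We still need to locate the position of a deleted interval $s$ in $\Comp(v_i)$ for each relevant $v_i$ in $O(\log n/\log\log n)$ total time, but that is exactly the navigation procedure of section~\ref{sec:compact} and is unchanged. We must also handle the $\oS(u)\supset S(u)$ trick: a deleted-but-not-removed identifier contributes $0$, so the block data structure $G(B)$ is replaced by one that, given $f$, returns $|\{s\in B\cap S(u): s\in S_{lr}(u),\ l\le f\le r\}|$ — but in fact per-node counts $c_{lr}(u)$ already suffice and $G(B)$ can be dropped entirely for counting (we never need to identify a particular witnessing interval).

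For the search procedure: start at the leaf $v_0$, examine its $O(\log^\eps n)$ intervals, and count how many are stabbed by $q$; call this $\Sigma_0$. At node $v_i$, $i\ge 1$, with $v_{i-1}$ the $f_i$-th child, query the node structure of $v_i$ to get $c_i=\sum_{l\le f_i\le r}c_{lr}(v_i)$ in $O(1)$ time and set $\Sigma_i=\Sigma_{i-1}+c_i$. After $R=O(\log n/\log\log n)$ steps, $\Sigma_R$ is the answer. The rebalancing analysis of section~\ref{sec:rebal} carries over verbatim: when a node $u$ is split the counts $c_{lr}$ for $u',u'',w$ are recomputed from the block data structures in $O(n_u)$ time, and the identifier-incrementing look-up table $T$ is unaffected since counts, not identifiers, are what changed. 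The main obstacle — and it is minor — is verifying that maintaining the per-node two-dimensional count table really does cost only $O(1)$ amortized per update and $O(n_u)$ per split, and that the $\oS(u)$/global-rebuild mechanism does not corrupt the counts; since counts are incremented/decremented in lock-step with genuine insertions/deletions and the global rebuild after $n/2$ deletions resets everything, this is immediate. Hence queries take $O(\log n/\log\log n)$ time, insertions $O(\log n)$ amortized, deletions $O(\log n/\log\log n)$ amortized, and the space remains $O(n\log n)$ bits, i.e.\ linear in words.
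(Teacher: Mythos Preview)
Your approach is exactly the paper's: keep the base tree and the $\Comp$ machinery, replace the max-tracking structure $M(u)$ by a per-node count structure (the paper calls it $X(u)$), and sum along the search path. The leaf handling and the query/update outline are correct.

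There is one technical slip, however, and it matters for the deletion bound. You write that the per-node count table can be ``rebuilt in $O(\log^{2\eps}n)=o(\log^\eps n)$ time whenever a single count changes.'' First, $\log^{2\eps}n$ is \emph{larger} than $\log^\eps n$, not smaller. Second, rebuilding a table of $\Theta(\log^{2\eps}n)$ entries on every update, or even incrementally touching the $O(\log^\eps n)$ affected prefix-sum entries, costs $\omega(1)$ per node; over the $O(\log n/\log\log n)$ nodes on a path this gives $O(\log^{1+\eps}n/\log\log n)$ or worse, which exceeds the claimed $O(\log n/\log\log n)$ deletion time. The counts $c_{lr}(u)$ can be as large as $|S(u)|=\Theta(n)$, so they do not pack into a single word and the Appendix~A trick for $M(u)$ (which relies on storing only \emph{ranks} in $O(\log\log n)$ bits each) does not apply directly.

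The paper's fix is to split $X(u)$ into two parts: a static array $A$ with $A[f]=\sum_{l\le f\le r}|S_{lr}(u)|$ that is rebuilt only after $\Theta(\log^{2\eps}n)$ updates, and a single word $B$ holding signed deltas $m(l,r)$ (each bounded by $O(\log^{\eps}n)$, hence $O(\log\log n)$ bits) for the updates since the last rebuild. A query returns $A[f]+\sum_{l\le f\le r}m(l,r)$, the second sum computed in $O(1)$ by table look-up on $B$; an update touches one $m(l,r)$ in $O(1)$; the periodic rebuild of $A$ amortizes to $O(1)$. With this in place your argument goes through unchanged.
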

The only difference with the proof  of Theorem~\ref{theor:1d} 
is that we store data structures for counting 
intervals instead of $M(u)$. We maintain a  data structure $X(u)$ in 
each internal  node $u$.
 For a query index $f$, $X(u)$ reports in $O(1)$ time the total number 
of intervals in $\cup_{l\leq f\leq r} S_{lr}(u)$. 
In every leaf node $u_l$, we maintain a  data structure $Z(u_l)$ 
that supports  stabbing sum queries on $S(u_l)$ and updates in 
$O(\log\log n)$ time.

As above, let  $\pi=v_0\ldots v_R$ denote the search path for a query 
point $q$. In every node $v_i\in \pi$, $i\geq 1$, we count 
the number $n(v_i)$ of intervals in  $\cup_{l\leq f\leq r} S_{lr}(u)$
using $X(v_i)$; the 
index $f$ is 
chosen so  that $v_{i-1}$ is the $f$-th child of $v_i$.
We also compute the number $n(v_0)$ of intervals that belong to 
$S(v_0)$ and are stabbed by $q$ using $Z(v_0)$. 
An interval $s$ is stabbed by $q$ either if $s$ is stored in $S(v_i)$,
$i\geq 1$, and $rng(v_{i-1})\subset s$ or if $s$ is stored in $S(v_0)$ and 
$s$ is stabbed by $q$. Hence, $q$ stabs $\sum_{v_i\in \pi} n(v_i)$ intervals.
Each $n(v_i)$, $i\geq 1$, can be found in $O(1)$ time and $n(v_0)$ 
can be found in $O(\log \log n)$ time. Hence,  a query can be answered 
in $O(\log n/\log\log n)$ time.

Now we turn to the description of  the data structure $X(u)$.
Following the idea 
of~\cite{PD04}, we store information about the recent  updates 
in a word $B$; the array $A$ reflects the state of the structure 
before recent updates. $A$ is a static array that is re-built after 
$\log^{2\eps}n$ updates of $X(u)$.
When we construct $A$, we  set $A[f]=\sum_{l\leq f\leq r} |S_{lr}(u)|$. 
The word $B$ contains one integer value $m(l,r)$ 
for each pair $l\leq r$ ($m(l,r)$ can also be negative, but the absolute 
value of each $m(l,r)$ is bounded by $O(\log^{\eps}n)$). 
When a segment is inserted into (deleted from) $S_{lr}(u)$, we 
increment (decrement) the value of $m(l,r)$ by $1$. 
We can find $\sum_{l\leq f\leq r} m(l,r)$ in $O(1)$ time using a look-up table. 
After $\log^{\eps} n$ updates we rebuild the array $A$ and set 
all $m(l,r)=0$. The amortized cost of rebuilding $A$ is $O(1)$.

The total number of segments in $\cup_{l\leq f\leq r} S_{lr}(u)$ equals to 
$A[f]+\sum_{l\leq f\leq r} m(l,r)$. Hence, a query to $X(u)$ is answered 
 in $O(1)$ time. 
We implement $A$ in such way that each entry $A[i]$ uses $O(\log |S(u)|)$ 
bits. Thus each data structure $X(u)$ uses $O(\log^{\eps}n \log |S(u)|)$
bits and all $X(u)$, $u\in T$, use $O(n\log n)$ bits in total. 

It remains to describe the data structure $Z(u)$. 
Let $E(u)$ be the set that contains endpoints of all intervals 
in $S(u)$. Since $u$ is a leaf 
node, $S(u)$ contains $O(\log^{2\eps} n)$ elements.  Hence, it takes 
$O(\log \log n)$ time to  find the rank $r(q)$ of $q$ in $E(u)$.
For each $1\leq i\leq |E|$, $C[i]$ equals to  the number of intervals 
that are stabbed by  a point $q$ with rank $r(q)=i$.  
The array $C$ enables us to count intervals stabbed by $q$ in $O(1)$ 
time if the rank of $q$ in $E(u)$ is known. 
Since $|E|=O(\log^{\eps} n)$ and $C[i]=O(\log^{\eps} n)$, the array 
$C$ uses $O(\log^{\eps}n\log\log n)$ bits of  memory. 
When a set $S(u)$ is updated, 
we can update $C$ in $O(1)$ time using a look-up table.
Thus $Z(u)$ uses linear space and supports both queries and updates 
in $O(\log\log n)$ time.

\newpage

\section*{Appendix A. Auxiliary Data Structures}
Using bitwise operations and table look-ups, we can implement the 
 data structure $M(u)$ and the block data structures so that queries 
are supported in constant time.
 
{\bf Data Structure $M(u)$.}
Let $\cM(u)$ be the set of all interval priorities stored in $M(u)$. 
Recall that $\cM(u)$ contains an  element $\max_{ij}$ for each set 
$S_{ij}(u)$, where 
$\max_{ij}$ is the highest priority interval (or its block label) stored 
in $S_{ij}(u)$. 
For simplicity, we assume that $\max_{ij}=-\infty$ if 
$S_{ij}=\emptyset$. 
Let the rank of $x$ in $\cM(u)$ be defined as 
$\rank(x,\cM(u))=|\{\,e\in \cM(u)\,|\,e\leq x\,\}|$. 
Let $\cM'(u)$ be the set in which every element of $\cM(u)$ is replaced 
with its rank in $\cM(u)$.  
That is, for each $\max_{ij}$ stored in $\cM(u)$ we store 
$\max'_{ij}=\rank(\max_{ij},\cM(u))$ in the set $\cM'(u)$. 

Each of $O(\log^{1/4}n)$ elements in $\cM'(u)$ 
is bounded by $O(\log^{1/4}n)$. 
Hence, we can store $\cM'(u)$ in one bit sequence $W$; $W$ consists 
of $O(\log^{1/4}n\log\log n)$ bits and  fits into a machine word.  We can 
store a table $\Tbl_1$ with an entry for each possible value of $W$ and
 for each $f$. 
The entry $\Tbl[W][f]$ contains the pair $<\!l,r\!>$, such that 
$\max'_{lr}$ is the highest value in the set 
$\{\,\max'_{ij}\,|\, i\leq f\leq j\,\}$.

Obviously, $\max_{ab}<\max_{cd}$ if and only if $\max'_{ab}<\max'_{cd}$. 
Therefore a query $f$ can be answered by looking up the value 
$<\!l,r\!>=\Tbl[W][f]$ for $W=\cM'(u)$ and returning the element 
$\max_{lr}$.

When $M(u)$ is updated, the value of some $\max_{ij}$ is changed. 
 We store all elements of $\cM(u)$ in an atomic heap~\cite{FW94,Th99} 
that supports predecessor queries and updates in $O(1)$ time.  
Hence, the new rank of $\max_{ij}$ 
in $\cM(u)$ can be found in $O(1)$ time.  
If the rank of $\max_{ij}$ has changed, the ranks of all other elements 
in $\cM(u)$ also change.  
Fortunately, $\max'_{ij}$ can assume only $O(\log^{1/4}n)$ values. 
There are  $O(\log^{1/4}n)$ elements $\max'_{ij}$ in a set $\cM'(u)$,  
and there are $2^{O(\log^{1/4}n)}=o(n)$ different sets $\cM'(u)$.
Besides that, each $W=\cM'(u)$ fits into one word. 
Hence, we can update the set $\cM'(u)$ using a look-up in a table 
$\Tbl_1$. 
Suppose that the rank of the $f$-th element in a set $\cM'(u)$ is 
changed to $r$. Then the new set $\cM'(u)$ is stored  
in an entry $\Tbl_1[W][f][r]$ of $\Tbl_1$; here $W$ is the bit sequence 
that corresponds to the old set $\cM'(u)$. 
Thus an update of $M(u)$ can be implemented in $O(1)$ time. 

We need only one instance of tables $\Tbl$ and $\Tbl_1$ for all nodes $u$ 
of the base tree. Both tables use $o(n)$ space and can be initialized 
in $o(n)$ time.

{\bf Data Structure $F(B)$.}
The data structure $F(B)$ for a block $B\in \Comp(u)$ is implemented as a 
B-tree $\bT_F$. Every leaf of $\bT_F$ contains $\Theta(\rho)$ identifiers and each internal node has degree $\Theta(\rho)$ for 
$\rho=\sqrt{\log n}$. Recall that each identifier consists of at most two 
indices $i_1$, $i_2$  such that the corresponding interval is stored in 
the $i_1$-th and $i_2$-th children of the node $u$ in the base tree. 
 All identifiers stored in a leaf node can be packed 
into $O(\rho\log\log n)$ bits. In every internal node $\nu$ of $\bT_F$, we store the bit sequence  $W(\nu)$.  For every child $\nu_i$ and for every possible value 
of $j$, $W(\nu)$ contains  the total 
number of indices $j$  in the $i$-th child $\nu_i$ of $\nu$; 
$W(\nu)$ consists of  $O(\rho\log \log n)$ bits. 
Using a look-up table, we can count for any $W(\nu)$ and for any $i$ 
the total number of elements stored in 
the children $\nu_1,\ldots,\nu_{i-1}$ of $\nu$. 
For any $W(\nu)$ and $i$, we can also count the total number of 
indices $j$ in 
the children $\nu_1,\ldots,\nu_{i-1}$ of $\nu$. 
For any $k\leq 2\log^3 n$ and any $W(\nu)$, $j$,   we can find the largest 
$i$ such that the total number of indices $j$ in 
$\nu_1\ldots,\nu_i$ does not exceed $k$. 
Look-up tables that support such queries use $o(n)$ space and can be 
initialized in $o(n)$ time. 
We need only one instance of each table for all blocks $B$ and all $F(B)$.

All queries and updates of  a data structure $F(B)$ can be 
processed by traversing a path 
in $\bT_F$. Using the above described look-up-tables, we spend 
only constant time in each node of $\bT_F$; hence, queries and updates 
are supported in $O(1)$ time.
Details will be given in the full version of this paper. 
Data structures $G(B)$ and $H(B)$ can be implemented in a similar way.


\end{document}